\documentclass[runningheads]{llncs}

\RequirePackage[colorlinks=true]{hyperref}
\hypersetup{
  linkcolor=[rgb]{0,0,0.4},
  citecolor=[rgb]{0, 0.4, 0},
  urlcolor=[rgb]{0.6, 0, 0}
}
\usepackage{pgfplots}
\usepackage{amsmath}
\usepackage{amsfonts}
\usepackage{lipsum}
\usepackage{setspace}
\usepackage{mdframed}
\usepackage{multicol}
\usepackage{multirow}
\usepackage{subcaption}
\usepackage{booktabs}
\usepackage{graphicx}
\usepackage{comment}
\usepackage{tikz}
\usepackage{xspace}
\usepgflibrary{arrows}
\usepackage{fancyvrb}
\usepackage{multicol}
\usepackage{comment}
\usepackage{thm-restate}

\usetikzlibrary{decorations.pathreplacing,backgrounds}
\usetikzlibrary{plotmarks}

\usepackage[font=footnotesize]{caption}

\usepackage{algorithmicx}
\usepackage{algorithm} 
\usepackage[noend]{algpseudocode}

\algrenewcommand\algorithmicindent{1.0em}%

\newcommand\sse{\subseteq}
\newcommand\Sym[1]{\ensuremath{\mathrm{Sym}_{#1}}}
\newcommand\set[1]{\ensuremath{\{#1\}}}
\newcommand\condset[2]{\set{#1 \;|\; #2}}
\newcommand\NP{\ensuremath{\mathsf{NP}}}
\newcommand\coNP{\ensuremath{\mathsf{coNP}}}

\newcommand\poly[1]{\ensuremath{\mathrm{poly}(#1)}}

\newcommand{\puz}[2]{$(#1, #2)$-puzzle\xspace}
\newcommand{\puzs}[2]{$(#1, #2)$-puzzles\xspace}
\newcommand{\susp}[2]{$(#1, #2)$-SUSP\xspace}
\newcommand{\susps}[2]{$(#1, #2)$-SUSPs\xspace}

\newcommand{\censor}[1]{#1}

\date{}

\title{Efficiently-Verifiable Strong Uniquely Solvable
  Puzzles and Matrix Multiplication}
\titlerunning{Efficiently-Verifiable SUSPs and Matrix Multiplication}
\author{
  Matthew Anderson
  \and%
 Vu Le
}
\institute{Department of Computer Science \\ Union College
  \\ Schenectady, New York, USA \\ \email{\{andersm2, lev\}@union.edu}}

\begin{document}

\maketitle
\begin{abstract}
Following the approach of \cite{ajx20}, we advance the Cohn-Umans
framework \cite{cu03,cksu05} for developing fast matrix multiplication
algorithms.  We introduce, analyze, and search for a new subclass of
strong uniquely solvable puzzles (SUSP), which we call
\emph{simplifiable SUSPs}.  We show that these puzzles are efficiently
verifiable, which remains an open question for general SUSPs.  We also
show that individual simplifiable SUSPs can achieve the same strength
of bounds on the matrix multiplication exponent $\omega$ that infinite
families of SUSPs can.  We report on the construction, by computer
search, of larger SUSPs than previously known for small width.
This, combined with our tighter analysis, strengthens the upper bound on
the matrix multiplication exponent from $2.66$ to $2.505$ obtainable
via this computational approach, and nears the results of the
handcrafted constructions of \cite{cksu05}.
  
\keywords{matrix multiplication, \and simplifiable strong uniquely solvable
  puzzle, \and arithmetic complexity, \and 3D matching, \and
  iterative local search}
\end{abstract}

\section{Introduction}
\label{sec:intro}

Square matrix multiplication is a fundamental mathematical operation:
Given $n \in \mathbb{N}$, a field $\mathbb{F}$, and matrices $A, B \in
\mathbb{F}^{n \times n}$, compute the resulting matrix $C = AB$ where
the entry $(i,k) \in [n]^2$ is $C_{i,k} = \sum_{j \in [n]} A_{i,j}B_{j,k}$.

The complexity of this problem has been well studied.  Early work by
Strassen gave a recursive, divide-and-conquer algorithm for square
matrix multiplication that runs in time $O(n^{2.81})$ \cite{str69}.
The situation steadily improved over the next two decades, culminating
with the $O(n^{2.376})$ time Coppersmith-Winograd algorithm
\cite{cw90}.  More recently, a series of refinements to the
Coppersmith-Winograd algorithm has resulted in a state-of-the-art
algorithm that runs in time $O(n^{2.37188})$
\cite{ds13,le14,aw20,dwz22}.  The question remains open: \emph{ What
is the smallest $\omega$ for which there exists a matrix
multiplication algorithm that runs in time $O(n^\omega)$?}

Instead of following the traditional approach of refinements to
Coppersmith-Winograd, we pursue the framework developed by Cohn and
Umans \cite{cu03,cksu05}.  This framework connects the existence of
efficient algorithms for matrix multiplication to the existence of
combinatorial objects called \emph{strong uniquely solvable puzzles (SUSP)}.  An \emph{\puz{s}{k}} $P$ is a subset of $\set{1,2,3}^k$ with
cardinality $|P| = s$.  We defer the formal definition of SUSPs to
\autoref{sec:prelim}, but note that on input $P$, the strong unique
solvability of $P$ is decidable in \coNP\footnote{It remains open
  whether SUSP verification is \coNP-complete.}.  The larger the
\emph{size} $s$ of a strong uniquely solvable puzzle is for a fixed
$k$, the more efficient of a matrix multiplication algorithm is implied
by the Cohn-Umans framework (see \autoref{lem:puzzle-omega}).
Anderson et al.~initiated a systematic computer-aided search for large puzzles that are SUSPs \cite{ajx20}.  They developed algorithms that are
sufficiently efficient in practice---using reductions to \NP-hard
problems, and sophisticated satisfiability and integer programming
solvers---for verifying SUSPs.  They applied those algorithms to find
large SUSPs of small width $k \le 12$.

There are several aspects of the work of Anderson et al.~that
warranted further study: (i) although the verification algorithm was
shown to be experimentally effective, its worst-case performance was
exponential time, (ii) the results they used from \cite{cksu05}
to imply efficient matrix multiplication algorithms were limited
because they only found individual SUSPs of small size, rather than
infinite families of SUSPs like in the constructions of \cite{cksu05},
and (iii) they experimentally observed that for some pairs of SUSPs
$P_1$, $P_2$, the Cartesian product $P_1 \times P_2$ was also an SUSP,
but they did not provide a theoretical explanation as to why.  These
aspects limited the small-width SUSPs that were found in
\cite{ajx20,ajx23} to only be able to achieve the bound $\omega \le
2.66$.

\subsection{Our Contributions}

We make progress on the computer-aided search for large SUSPs and
resolve the three limitations mentioned above by introducing a new
class of SUSPs that we call \emph{simplifiable SUSPs}.

In \cite{ajx20} they show that the problem of verifying whether a
puzzle $P$ is an SUSP reduces to determining whether a related
tripartite hypergraph $H_P$ has no nontrivial 3D matchings.  We
describe a polynomial-time simplification algorithm that takes a 3D
hypergraph and attempts to simplify it to the trivial matching without
changing the set of matchings the graph has.  In this way, we define
simplifiable SUSPs to be puzzles $P$ whose 3D hypergraph $H_P$
simplifies to the trivial matching.  This gives a polynomial-time
algorithm to generate a proof that $P$ is an SUSP.  In this way,
simplifiable SUSPs are polynomial-time verifiable by definition, making
them more feasible to search for.
\begin{restatable}{theorem}{simplifyefficient}
  \label{thm:simplify}
  Let $P$ be an \puz{s}{k}. There is an algorithm for determining whether $P$ is a simplifiable SUSP.  The algorithm runs in time $\poly{s, k}$.
\end{restatable}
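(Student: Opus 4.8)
The plan is to give a three-phase algorithm and to bound the running time of each phase by a polynomial in $s$ and $k$. First I would construct the tripartite hypergraph $H_P$ from the input puzzle $P$, relying on the reduction of \cite{ajx20} that equates strong unique solvability of $P$ with the absence of nontrivial 3D matchings in $H_P$. Second I would run the simplification algorithm described above, which repeatedly applies local reduction rules that preserve the set of matchings. Third I would test whether the resulting hypergraph is exactly the trivial matching, outputting ``yes'' precisely in that case. Correctness is then immediate from the \emph{definition} of a simplifiable SUSP: $P$ is a simplifiable SUSP if and only if $H_P$ simplifies to the trivial matching, and the matching-preservation property of the reduction rules guarantees that reaching the trivial matching certifies the absence of any nontrivial matching, hence that $P$ is an SUSP (though not conversely, so the algorithm decides membership in the simplifiable subclass, not SUSP-hood in general).

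The bulk of the argument is the time analysis. For the first phase, I would bound the number of vertices and hyperedges of $H_P$ by a fixed polynomial in $s$ and $k$ and show that each vertex and edge can be emitted in $\poly{s,k}$ time: there are at most $O(s^3)$ candidate hyperedges, and testing whether a candidate is present amounts to scanning the at most $k$ columns of the triple of rows it joins, for a total of $O(s^3 k)$. This makes $H_P$ constructible, and representable, in $\poly{s,k}$ space and time.

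For the second phase, the key observation is that each application of a reduction rule strictly decreases a monotone nonnegative potential---concretely, the total number of vertices plus hyperedges---because every rule deletes at least one vertex or edge and no rule ever adds one. Hence the number of iterations is at most the initial size of $H_P$, which is $\poly{s,k}$. Within a single iteration, I would scan the current hypergraph for a rule whose applicability condition holds; since each such condition is a local, polynomially checkable predicate on the incident vertices and edges, one iteration costs $\poly{s,k}$. Multiplying the per-iteration cost by the iteration bound keeps the whole phase polynomial, and the final equality test against the trivial matching is a direct comparison in $\poly{s,k}$ time.

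The main obstacle I anticipate is making the termination and per-iteration bounds simultaneously rigorous while preserving correctness. I must verify that checking applicability of every rule is genuinely \emph{local}, so that a single iteration runs in polynomial time rather than implicitly encoding a search for a matching---this is exactly the point at which the procedure must sidestep the \coNP\ hardness lurking in general SUSP verification. I must also confirm that the process is order-independent (confluent) enough that a single deterministic greedy pass reaches the trivial matching whenever \emph{some} reduction sequence does; otherwise a deterministic algorithm could fail to certify a puzzle that the definition counts as simplifiable. Provided the reduction rules are, as described, matching-preserving, strictly size-decreasing, and locally checkable with an order-independent outcome, the potential argument bounds the iteration count and the locality bounds the per-iteration cost, together yielding the claimed $\poly{s,k}$ running time; the remaining work is the routine verification that each individual rule meets these requirements.
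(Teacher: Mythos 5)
Your three-phase algorithm (build $H_P$, simplify, test for the trivial matching) is exactly the structure of the paper's proof, and your termination argument via a decreasing potential matches the paper's observation that each of the at most $s^3-s$ off-diagonal edges of $H_P$ can be deleted only once. However, the two steps you defer as ``routine verification'' are where the real work lies, and as written they are gaps. First, the per-iteration cost: a simplification rule is indexed by a subset $S \sse U$ of the domain of a 2D face (it applies when $S \times (U-S)$ or $(U-S)\times S$ contains no edges), so there are exponentially many candidate rules, and deciding whether \emph{some} rule applies is not a ``local, polynomially checkable predicate on the incident vertices and edges.'' The paper sidesteps this by never searching over sets at all: for each projection $R_f$ it computes the edges not used in any perfect matching of $R_f$ (the maximally-matchable-edges algorithm of Tassa/R\'egin, via the strongly connected components of $R_f$ viewed as a digraph), and shows that these SCCs are exactly the sets that induce simplifications. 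Without an idea of this kind, your phase-two bound does not go through.

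Second, the confluence issue you flag is genuine and is not resolved by your proposal; it is needed for correctness, not merely efficiency, because \autoref{def:simplifiable} is existential over simplification sequences while your algorithm is a single deterministic greedy process. The paper's resolution is again the maximal-matchability characterization: by \autoref{lem:simplify_set}, any set-induced simplification of a face $R_f$ can delete only edges lying in no perfect matching of $R_f$, and the greedy pass deletes \emph{all} such edges at once; hence each pass dominates every possible individual rule application, the fixed point reached is the unique complete simplification, and the greedy algorithm reaches the trivial matching whenever some simplification sequence does. So your outline coincides with the paper's, but both the polynomial per-iteration bound and the order-independence require the SCC/maximally-matchable-edges machinery that your proposal does not supply.
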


We show that simplifiable SUSPs have a number of other interesting
properties that make them a good candidate to search for when trying to
improve bounds on $\omega$.  In particular, we show that simplifiable
SUSPs are a natural generalization of \emph{local SUSPs} from
\cite{cksu05}.  Local SUSPs are also efficiently verifiable, but since
they are not densely encoded, they are hard to effectively search for.
Relatedly, we show that simplifiable SUSPs are closed under Cartesian
product, which is not the case for general SUSPs.  We show that this
property allows a single simplifiable SUSP to generate an infinite
family of SUSPs by taking all powers of the puzzle.  This allows the
stronger infinite-family bound on $\omega$ of \cite{cksu05} to be
applied, which strengthens the bounds on $\omega$ implied by
individual simplifiable SUSPs.
\begin{restatable}{theorem}{simplifiableomega}
  \label{thm:simplifiableomega}
  Let $\epsilon > 0$, if there is a simplifiable \susp{s}{k} $P$, then
  there is an algorithm for multiplying $n$-by-$n$ matrices in time
  $O(n^{\omega+\epsilon})$ where
  $$\omega \le \min_{m \in \mathbb{N}_{\ge 3}} 3 \cdot \frac{k \log m
    - \log s}{k \log(m-1)}.$$
\end{restatable}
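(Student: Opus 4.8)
The plan is to bootstrap the single simplifiable puzzle $P$ into an \emph{infinite family} of SUSPs and then feed that family into the asymptotic (infinite-family) form of the Cohn--Umans bound, which is strictly stronger than what a lone puzzle gives. The linchpin is the closure of simplifiable SUSPs under Cartesian product, which we establish earlier in the paper and which \emph{fails} for SUSPs in general. Concretely, since $P$ is a simplifiable \susp{s}{k}, every Cartesian power $P^{\times t} = P \times \cdots \times P$ ($t$ copies) is again a simplifiable SUSP, hence an SUSP. A direct parameter count shows that $P^{\times t}$ is an \susp{s^t}{tk}, because forming Cartesian products adds widths and multiplies sizes. This manufactures, from one puzzle, an infinite family $\{P^{\times t}\}_{t \ge 1}$ of SUSPs whose widths $tk \to \infty$.

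Next I would apply the infinite-family version of the Cohn--Umans bound (cf.~\autoref{lem:puzzle-omega} and \cite{cksu05}) to this family. The decisive feature is that the family has a \emph{scale-invariant} rate: the per-coordinate size $\bigl(s^t\bigr)^{1/(tk)} = s^{1/k}$ is independent of $t$. Consequently, for each integer $m \ge 3$ the family yields matrix-multiplication algorithms whose exponent tends to $3 \cdot \frac{(tk)\log m - \log(s^t)}{(tk)\log(m-1)} = 3 \cdot \frac{k \log m - \log s}{k \log(m-1)}$, an expression that does not depend on the power $t$. Minimizing this $t$-independent quantity over $m \in \mathbb{N}_{\ge 3}$ produces exactly the claimed bound on $\omega$.

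The $\epsilon$ in the running time is precisely the asymptotic slack of this approach: the family bound is attained only in the limit, so for any fixed target $\epsilon > 0$ I would choose the power $t$ large enough that the concrete algorithm obtained from $P^{\times t}$ multiplies $n$-by-$n$ matrices in time $O(n^{\omega+\epsilon})$, with the exponent driving down to $\omega$ as $t \to \infty$. The main obstacle is the first step, namely correctly invoking the product-closure of simplifiable SUSPs so that \emph{every} power is guaranteed to be a genuine SUSP; everything downstream---the parameter count for $P^{\times t}$, the scale-invariance of the rate, and the bookkeeping that converts the limiting exponent into an $O(n^{\omega+\epsilon})$ bound---is routine. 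I would also verify the regime of validity of the family bound (that $m \ge 3$ and that the growing widths $tk$ meet its hypotheses), so that the minimization over $m$ is legitimate.
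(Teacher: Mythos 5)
Your proposal is correct and follows essentially the same route as the paper: use closure of simplifiable SUSPs under Cartesian product to generate the infinite family $\{P^{\times t}\}$ with scale-invariant capacity $s^{1/k}$, then apply the infinite-family bound of Cohn et al.\ (\autoref{lem:family-omega}) to that capacity. The only slip is a mis-citation --- the infinite-family bound you describe is \autoref{lem:family-omega}, not \autoref{lem:puzzle-omega} --- but the mathematical content is the intended one.
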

\noindent Additionally, we show that
simplifiable SUSPs can achieve any bound on $\omega$ that SUSPs can.

Finally, we report finding new large simplifiable SUSPs of small width
that improve the bounds on $\omega$ from $2.66$ to $2.505$ via the
computational Cohn-Umans approach.  The SUSPs we construct for small
width are considerably larger than those of the previous work
\cite{cksu05,ajx20,ajx23}, and imply stronger bounds on $\omega$ for
the same domain.  However, it is important to note that this
computational approach has yet to surpass the $\omega \le 2.48$ bound
implied by the infinite families of SUSPs handcrafted in
\cite{cksu05}, or the state-of-the-art Coppersmith-Winograd
refinements with the record bound of $\omega \le 2.37188$ \cite{dwz22}.

Our results further the computational approach to developing efficient
matrix multiplication algorithms using the Cohn-Umans framework started
by \cite{ajx20}.  Although it has yet to do so, this programme is
motivated by the hope that with further advancement this
non-traditional approach might meet or even exceed the algorithms that
result from refinements to Coppersmith-Winograd.

\subsection{Related Work}

For more background on and history of algorithms for the
matrix multiplication problem, see the excellent survey by Bl{\"a}ser
\cite{gs005}.

Some negative results are known for the Cohn-Umans framework that
apply to our work as well. In particular, a series of articles
\cite{clp17,asu13,bccgu16,avw18a} showed that there exists an
$\epsilon > 0$ such that this framework, as well as a variety of other algorithmic
approaches, cannot achieve $\omega = 2 + \epsilon$.  This implies that
our approach cannot achieve the best potential result of
$\widetilde{O}(n^2)$, however, the authors are unaware of a
concrete value known for this $\epsilon$.  There remains considerable
distance between the state-of-the-art refinements of the
Coppersmith-Winograd algorithms and the known lower bounds.

Our search for simplifiable SUSPs is implemented using a standard search
technique called \emph{iterative local search}, c.f, e.g,
\cite{ai:ma4}.  Some comparison with our work can be drawn to another
recent, widely publicized computational approach by Fawzi et al.~who
used reinforcement learning to generate low-rank representations of
the matrix multiplication tensor \cite{fawzi2022}, producing
algorithms with $\omega \le 2.77$.  Although their results avoid the involved
representation-theoretic machinery of the Cohn-Umans framework, 
our $\omega$ bounds are considerably stronger than theirs, which is
also true for the earlier work in \cite{ajx20,ajx23}.

\subsection{Organization}

\autoref{sec:prelim} discusses relevant background on strong uniquely
solvable puzzles and their relationship with matrix multiplication
algorithms from \cite{cksu05}, and the connection between the
verification of SUSPs and 3D perfect matching from \cite{ajx20}.
\autoref{sec:efficient} develops some observations about 2D and 3D
matching that lead to the definition of simplifiable SUSPs, shows that
simplifiable SUSPs are efficiently verifiable, and shows that they are
a generalization of local SUSPs.  \autoref{sec:infinite-families}
proves several useful properties of simplifiable SUSPs, including that
they generate infinite families of SUSPs and as a consequence imply
stronger bounds on $\omega$.  \autoref{sec:results} reports on the
new large SUSPs we found, the concrete bounds on $\omega$ they imply compared to previous work, and briefly discusses our search
algorithms and implementation.  \autoref{sec:conclusion} concludes
with several related open problems.

\section{Preliminaries}
\label{sec:prelim}

For a natural number $n \in \mathbb{N}$, we use $[n]$ to denote the set
$\{1, 2,...,n\}$. $\Sym{Q}$ denotes the symmetric group on the
elements of a set $Q$.

\begin{definition}[Puzzle]
  For $s, k \in \mathbb{N}$, an \puz{s}{k} is a subset $P \subseteq [3]^k$
  with $|P| = s$.
\end{definition}
We say that an \puz{s}{k} has $s$ rows and $k$ columns. The columns
are inherently ordered and indexed by $[k]$. The rows are not
inherently ordered, although it is often convenient to assume that
they are arbitrarily ordered and indexed by $[s]$.  Cohn et
al.~studied the following class of puzzles that we call \emph{SUSPs}
\cite{cksu05}.

\begin{definition}[Strong Uniquely Solvable Puzzle (SUSP)]
  \label{def:strong}
  An $(s, k)$-puzzle $P$ is \emph{strong uniquely solvable} if $\forall \pi_1, \pi_2, \pi_3 \in Sym_P$, either (i) $\pi_1 = \pi_2 =
  \pi_3$, or (ii) $\exists r \in P$ and $i \in [k]$ such that exactly two of the following conditions are true: $(\pi_1(r))_i = 1$,
  $(\pi_2(r))_i = 2$, $(\pi_3(r))_i = 3$.
\end{definition}
Based on \autoref{def:strong}, the task of determining whether a
puzzle is an SUSP is in \coNP.  Anderson et al.~studied the problem of
determining whether a puzzle is an SUSP, devised a reduction from
this problem to a variant of the 3D perfect matching problem, and then
used it to develop a practical, but worst-case exponential time,
algorithm \cite{ajx20}.

Cohn et al.~also considered the following subset of SUSPs, called
\emph{local SUSPs}, which are puzzles that naturally demonstrate that
they are SUSPs.
\begin{definition}[Local SUSP]
  \label{def:local}
  An \puz{s}{k} $P$ is \emph{local strong uniquely solvable} if for
  each $(u, v, w) \in P^3$ with $u, v, w$ not all equal, there exists $c
  \in [k]$ such that $(u_c, v_c, w_c)$ is in the set
  $$\mathcal{L} = \set{(1,2,1), (1,2,2), (1,1,3), (1, 3, 3), (2, 2,
    3), (3, 2, 3)}.$$
\end{definition}
Based on \autoref{def:local}, the task of determining whether a puzzle
is a local SUSP can be done in time $O(s^3 \cdot k)$, by checking all
triples of rows.  Cohn et al.~show that SUSPs can be converted to
local SUSPs, albeit with a substantial increase in the parameters.
\begin{proposition}[{\cite[Proposition 6.3]{cksu05}}]
  \label{prop:susp-to-local}
  Let $P$ be an \susp{s}{k}, then there is a local \susp{s!}{sk}
  $P'$.  Moreover, SUSP capacity is achieved by local SUSPs.
\end{proposition}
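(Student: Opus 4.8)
The plan is to give an explicit ``blow-up'' construction of $P'$ from $P$ whose rows are indexed by permutations, and to observe that the pattern set $\mathcal{L}$ of \autoref{def:local} is exactly the set of column-patterns witnessing condition (ii) of \autoref{def:strong}. Concretely, writing the three conditions of \autoref{def:strong}(ii) as ``$a=1$'', ``$b=2$'', ``$c=3$'' for a triple $(a,b,c)\in[3]^3$, a direct case analysis shows that \emph{exactly two} of them hold iff $(a,b,c)\in\mathcal{L}$: fixing which pair holds yields $\set{(1,2,1),(1,2,2)}$, $\set{(1,1,3),(1,3,3)}$, and $\set{(2,2,3),(3,2,3)}$, whose union is $\mathcal{L}$. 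Thus condition (ii) at a fixed row $r$ and column $i$ for permutations $\pi_1,\pi_2,\pi_3$ is equivalent to $\big((\pi_1(r))_i,(\pi_2(r))_i,(\pi_3(r))_i\big)\in\mathcal{L}$.

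Next I would define $P'$. Label the rows of $P$ as $r_1,\dots,r_s$ (distinct, since $P$ is a set). Index the rows of $P'$ by permutations $\sigma\in\Sym{[s]}$ and its $sk$ columns by pairs $(j,i)\in[s]\times[k]$, setting $(P'_\sigma)_{(j,i)} := (r_{\sigma(j)})_i$; that is, row $\sigma$ is the concatenation of the blocks $r_{\sigma(1)},\dots,r_{\sigma(s)}$. Because the $r_j$ are distinct, $\sigma\mapsto P'_\sigma$ is injective, so $P'\subseteq[3]^{sk}$ has exactly $s!$ rows and is an \puz{s!}{sk}.

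I would then verify that $P'$ is a local SUSP by matching \autoref{def:local} to strong unique solvability of $P$. A triple of rows of $P'$ has the form $(P'_{\sigma_1},P'_{\sigma_2},P'_{\sigma_3})$ with $\sigma_1,\sigma_2,\sigma_3\in\Sym{[s]}\cong\Sym{P}$, and by injectivity these rows are not all equal iff the $\sigma_a$ are not all equal. At column $(j,i)$ the triple reads $\big((r_{\sigma_1(j)})_i,(r_{\sigma_2(j)})_i,(r_{\sigma_3(j)})_i\big)$, which by the reformulation lies in $\mathcal{L}$ iff condition (ii) of \autoref{def:strong} holds for $\pi_a=\sigma_a$ at row $r_j$ and column $i$. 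Hence ``$P'$ satisfies \autoref{def:local}'' is literally the statement ``for all $\sigma_1,\sigma_2,\sigma_3$ not all equal there exist $r_j,i$ realizing (ii)'', i.e.\ $P$ is an SUSP. Since $P$ is strong uniquely solvable by hypothesis, $P'$ is a local SUSP, proving the first claim.

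For the ``moreover'' part I would argue that the factorial blow-up is free in the capacity limit. Local SUSPs are SUSPs, so their capacity is at most the SUSP capacity $C$. For the reverse, take a family of $(s_n,k_n)$-SUSPs with $k_n\to\infty$ and $s_n^{1/k_n}\to C$, and apply the construction to each, yielding a local \susp{s_n!}{s_nk_n} for every $n$. Its per-width size is $(s_n!)^{1/(s_nk_n)}=\big((s_n!)^{1/s_n}\big)^{1/k_n}$, and by Stirling $(s_n!)^{1/s_n}=\tfrac{s_n}{e}(1+o(1))$, so this equals $s_n^{1/k_n}\cdot e^{-1/k_n}(1+o(1))\to C$ as $k_n\to\infty$. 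Thus the local-SUSP capacity is also $C$. The main obstacle is precisely this last step: one must confirm that SUSP capacity is defined as an asymptotic (large-width) growth rate, so that the additive $\tfrac{\log e}{k_n}$ penalty from the factorial is washed out in the limit; the construction and the $\mathcal{L}$-identification are then routine.
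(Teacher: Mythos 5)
The paper states this proposition as an imported result from \cite{cksu05} and gives no proof of its own, so there is nothing in the paper to compare against; your reconstruction is the standard argument from that reference (the permutation blow-up $P'_\sigma = r_{\sigma(1)}\conc\cdots\conc r_{\sigma(s)}$, the observation that $\mathcal{L}$ is exactly the set of triples satisfying ``exactly two of $a=1$, $b=2$, $c=3$'', and the Stirling estimate $(s!)^{1/(sk)} = s^{1/k}e^{-1/k}(1+o(1))$ to absorb the factorial in the capacity limit). The argument is correct, including the two points that need care: injectivity of $\sigma \mapsto P'_\sigma$ (so $|P'| = s!$ and ``rows not all equal'' matches ``permutations not all equal''), and the fact that $k_n \to \infty$ forces $s_n \to \infty$ and kills the $e^{-1/k_n}$ penalty.
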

Note that the second consequence of this proposition is that any bound
on $\omega$ that can be achieved by SUSPs can be achieved by local
SUSPs.

\subsection{From Matrix Multiplication to SUSPs} 

Using the concept of an SUSP, \cite{cu03} showed how to define group
algebras that allow matrix multiplication to be efficiently embedded
into them.  The existence of SUSPs implies upper bounds on the
matrix multiplication exponent $\omega$. 

The \emph{SUSP capacity} is defined as the largest constant $C$
such that there exist SUSPs of size $(C - o(1))^k$ and width $k$ for
infinitely many values of k \cite{cksu05}.  The constructions of Cohn
et al.~produce families $\mathcal{F}$ of \susps{s(k)}{k} for
infinitely many values of $k$.  The key parameter that relates $\omega$ to
the size of puzzles is the \emph{capacity $C_\mathcal{F}$ of the family}, defined as the limit of $(s(k))^{\frac{1}{k}}$ as $k$
goes to $\infty$.  Cohn et al.~showed the following bound on $\omega$
as a function of capacity.

\begin{lemma}[{\cite[Corollary 3.6]{cksu05}}]
  \label{lem:family-omega}
  Let $\epsilon > 0$, if there is a family $\mathcal{F}$ of SUSPs with
  capacity $C_\mathcal{F}$, then there is an algorithm for multiplying
  $n$-by-$n$ matrices in time $O(n^{\omega+\epsilon})$ where
  $$\omega \le \min_{m \in \mathbb{N}_{\ge 3}} 3 \cdot \frac{\log m -
    \log C_\mathcal{F}}{\log(m-1)}.$$
\end{lemma}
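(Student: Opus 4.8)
The plan is to reduce the stated bound on $\omega$ to Sch\"onhage's asymptotic sum inequality, mediated by the group-theoretic machinery of \cite{cu03,cksu05}. There are two ingredients: (A) a construction that turns each member of the family $\mathcal{F}$ into a configuration of index sets inside an abelian group that \emph{simultaneously} realizes many independent matrix multiplications, and (B) the simultaneous asymptotic sum inequality, which converts such a realization into a numerical constraint on $\omega$. Combining them and letting the width grow along $\mathcal{F}$ produces the capacity bound, and a final effectiveness argument upgrades the limiting inequality to an honest $O(n^{\omega+\epsilon})$ algorithm.

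For ingredient (A), fix an integer $m \ge 3$ and a member of the family, an \susp{s}{k} $P$. I would work in the abelian group $H = (\mathbb{Z}/m\mathbb{Z})^k$, of order $m^k$, and assign to each row $r \in P$ a triple of index sets $(A_r, B_r, C_r) \subseteq H$ built column by column: in column $j$ the symbol $r_j \in \{1,2,3\}$ decides which of the three tensor factors absorbs a fixed set of size $m-1$ in $\mathbb{Z}/m\mathbb{Z}$ (e.g. the nonzero elements), the other two absorbing a singleton. Consequently the row $r$ realizes a matrix multiplication $\langle a_r, b_r, c_r\rangle$ with $a_r b_r c_r = (m-1)^k$, independent of how the symbols are distributed within $r$. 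The heart of this step is to verify the \emph{simultaneous triple product property}: for all rows $r, r', r''$ and all $a,a' \in A_r$, $b,b' \in B_{r'}$, $c,c' \in C_{r''}$, the equation $(a-a')+(b-b')+(c-c')=0$ in $H$ forces $a=a'$, $b=b'$, $c=c'$ and $r=r'=r''$. I expect this to be the main obstacle, and it is exactly where the strong-uniquely-solvable hypothesis enters: a nontrivial solution would, coordinate by coordinate, pick out one of a small set of forbidden symbol patterns in column $j$ (the patterns $\mathcal{L}$ of Definition \ref{def:local}), and condition (ii) of Definition \ref{def:strong} guarantees that no nontrivial triple of rows avoids such a blocking pattern in every column. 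Dissecting the per-column arithmetic in $\mathbb{Z}/m\mathbb{Z}$ and matching it against Definition \ref{def:strong} is the delicate calculation.

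For ingredient (B), I would invoke Sch\"onhage's (simultaneous) asymptotic sum inequality: if an algebra simultaneously realizes matrix multiplications $\langle a_r, b_r, c_r\rangle$, then $\sum_r (a_r b_r c_r)^{\omega/3}$ is at most the sum of the $\omega$-th powers of the dimensions of its simple factors. For the commutative group algebra $\mathbb{C}[H]$ every irreducible representation is one-dimensional and there are $|H| = m^k$ of them, so the right-hand side is simply $m^k$. Since each of the $s$ rows contributes one copy with $a_r b_r c_r = (m-1)^k$, this gives
\[
  s \cdot \big((m-1)^k\big)^{\omega/3} \le m^k .
\]
Taking logarithms, dividing by $k$, and letting $k \to \infty$ along $\mathcal{F}$ (so that $s^{1/k} \to C_\mathcal{F}$ by the definition of capacity) yields $\log C_\mathcal{F} + (\omega/3)\log(m-1) \le \log m$, i.e. $\omega \le 3(\log m - \log C_\mathcal{F})/\log(m-1)$. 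As $m \ge 3$ was arbitrary, I minimize over $m$ to obtain the stated bound.

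Finally, to produce an algorithm of running time $O(n^{\omega+\epsilon})$ rather than merely a limiting inequality, I would use that the asymptotic sum inequality is effective: for any $\epsilon > 0$, the inequality at a \emph{single} sufficiently large width $k$ in the family already yields a bilinear algorithm whose recursive tensor-power application multiplies $n \times n$ matrices in time $O(n^{\omega+\epsilon})$. Because capacity is defined over infinitely many values of $k$, such a large $k$ always exists, so the $\epsilon$ slack absorbs both the $o(1)$ hidden in the capacity and the standard conversion from asymptotic rank to an explicit recursive algorithm.
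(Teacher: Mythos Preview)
The paper does not prove this lemma; it is imported verbatim from \cite{cksu05} as a cited result. Your overall strategy---embed the rows of an SUSP as index triples in $(\mathbb{Z}/m\mathbb{Z})^k$ satisfying the simultaneous triple product property (STPP), then invoke Sch\"onhage's asymptotic sum inequality and pass to the limit along the family---is indeed the route taken in \cite{cksu05}, so in outline you have reconstructed the intended argument.

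There are, however, two genuine gaps in the execution. First, your statement of the STPP is not the correct one. The condition that yields $s$ simultaneous matrix products requires elements drawn from \emph{different} index sets with the indices interleaving cyclically: $a\in A_r$, $a'\in A_{r'}$, $b\in B_{r'}$, $b'\in B_{r''}$, $c\in C_{r''}$, $c'\in C_{r}$, and one asks that $(a-a')+(b-b')+(c-c')=0$ force $r=r'=r''$. In your version, with $a,a'\in A_r$, $b,b'\in B_{r'}$, $c,c'\in C_{r''}$, the trivial choice $a=a'$, $b=b'$, $c=c'$ satisfies the equation for \emph{every} triple $(r,r',r'')$, so it can never force $r=r'=r''$; the condition as written is vacuous.

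Second, once STPP is stated correctly, the per-column blocking argument you sketch only goes through when every non-diagonal triple of rows is blocked in some column by an element of $\mathcal{L}$. That is precisely the \emph{local} SUSP condition (Definition~\ref{def:local}), not the SUSP condition (Definition~\ref{def:strong}), which quantifies over triples of \emph{permutations} rather than triples of rows; the paper itself exhibits an SUSP ($P_1=\{2233,1232,1123,3311\}$) that is not local. To obtain Lemma~\ref{lem:family-omega} for general SUSP families one must either use the construction in \cite{cksu05} that handles the permutation-based condition directly, or---more simply---derive it as the $k\to\infty$ limit of the single-puzzle bound (Lemma~\ref{lem:puzzle-omega}) along the family, since Stirling gives $\tfrac{\log s!}{sk}\to\log C_{\mathcal{F}}$.
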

In the same corollary, Cohn et al.~showed a weaker bound on $\omega$
based on a single SUSP.
\begin{lemma}[{\cite[Corollary 3.6]{cksu05}}]
  \label{lem:puzzle-omega}
  Let $\epsilon > 0$, if there is an \susp{s}{k}, there is an
  algorithm for multiplying $n$-by-$n$ matrices in time
  $O(n^{\omega+\epsilon})$ where
  $$\omega \le \min_{m \in \mathbb{N}_{\ge 3}} 3 \cdot  \frac{sk  \log
    m - \log s!}{sk\log(m-1)}.$$
\end{lemma}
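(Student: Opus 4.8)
The plan is to reduce this single-puzzle bound to the infinite-family bound of \autoref{lem:family-omega} by manufacturing an infinite family of SUSPs out of the given $(s,k)$-SUSP $P$. The naive idea---taking Cartesian powers $P,\, P\times P,\, P\times P\times P,\dots$---fails immediately, because the Cartesian product of two SUSPs need not be an SUSP (indeed, repairing exactly this defect is one of the stated motivations for the simplifiable SUSPs of this paper). So the first step must be to pass to a representation of $P$ that behaves well under products, and then to read off the capacity of the resulting family.

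First I would invoke \autoref{prop:susp-to-local} to convert $P$ into a local \susp{s!}{sk} $P'$. The point of moving to local SUSPs is that the local condition is checked column-by-column against the fixed set $\mathcal{L}$, which makes it robust under concatenation. Concretely, given $(U,V,W)\in (P'\times P')^3$ not all equal, split each of $U,V,W$ into its two length-$sk$ blocks; the triple must already differ in at least one block, and applying \autoref{def:local} to that block yields a column $c$ with $(U_c,V_c,W_c)\in\mathcal{L}$. Hence $P'\times P'$ is again a local SUSP, and by induction every Cartesian power $(P')^{\times t}$ is a local \susp{(s!)^t}{skt}.

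Next I would collect these powers into the family $\mathcal{F}=\set{(P')^{\times t}: t\ge 1}$, which supplies SUSPs for the infinitely many widths $sk,\,2sk,\,3sk,\dots$. Its capacity is immediate and in fact constant in $t$,
\[
  C_{\mathcal{F}} \;=\; \lim_{t\to\infty}\bigl((s!)^{t}\bigr)^{1/(skt)} \;=\; (s!)^{1/(sk)},
\]
so that $\log C_{\mathcal{F}} = \tfrac{1}{sk}\log s!$. Feeding this capacity into \autoref{lem:family-omega} yields an algorithm running in time $O(n^{\omega+\epsilon})$ with
\[
  \omega \;\le\; \min_{m\in\mathbb{N}_{\ge 3}} 3\cdot \frac{\log m - \tfrac{1}{sk}\log s!}{\log(m-1)}
  \;=\; \min_{m\in\mathbb{N}_{\ge 3}} 3\cdot \frac{sk\log m - \log s!}{sk\log(m-1)},
\]
where the final equality merely clears the common factor $sk$ from numerator and denominator inside the minimization; this is exactly the claimed bound.

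I expect the only genuinely delicate step to be the closure of local SUSPs under Cartesian product, precisely because the analogous statement is false for general SUSPs and the whole reduction hinges on it. The block-splitting verification sketched above is short, but it must be stated with care: it works because a local certificate lives in a single column and therefore survives concatenation untouched, whereas the global condition of \autoref{def:strong} quantifies over permutations of all rows simultaneously and can be destroyed when the rows of two puzzles are combined. Everything after that reduction is bookkeeping together with the elementary capacity computation displayed above.
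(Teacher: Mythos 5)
Your proposal is correct, and the algebra at the end does recover exactly the stated bound. Note, however, that the paper itself offers no proof of this lemma---it is imported verbatim as \cite[Corollary 3.6]{cksu05}---so the comparison is really with Cohn et al.'s original derivation rather than with anything in this text. Your route (pass from the \susp{s}{k} to a local \susp{s!}{sk} via \autoref{prop:susp-to-local}, observe that local SUSPs are closed under Cartesian product because a local certificate lives in a single column and survives concatenation, obtain a family of capacity $(s!)^{1/(sk)}$, and apply \autoref{lem:family-omega}) is a legitimate reconstruction, and the block-splitting argument for product-closure is sound: if $U,V,W$ are not all equal then some pair differs in some block, that block's triple is not all equal, and \autoref{def:local} supplies a column landing in $\mathcal{L}$. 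It is worth noting that this is precisely the template the paper itself uses to prove \autoref{thm:simplifiableomega} for simplifiable SUSPs (closure under product generates a family of the same capacity, then invoke the family bound); indeed, within the paper you could have obtained product-closure for local SUSPs for free by combining \autoref{lem:local-is-simplifiable} with \autoref{lem:simplifiable-susp-closure}, though your direct argument is more elementary and does not route through 3D-graph simplification. The only caveat is that you are attributing to Corollary 3.6 of \cite{cksu05} a proof strategy that may differ in presentation from theirs (which works group-theoretically with $\Sym{s}$), but as a self-contained derivation of the stated inequality from the other results quoted in this paper, your argument is complete.
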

Cohn et al.~also shows that if the SUSP capacity is
$C_{\mathrm{max}} = 3 / 2 ^ {2/3}$, it immediately follows that
$\omega = 2$.  As mentioned in the Introduction, subsequent work has
shown that the SUSP capacity is strictly less than $C_{\mathrm{max}}$.
That said, SUSPs still represent a viable route to improving the
efficiency of matrix multiplication algorithms.

\subsection{From SUSPs to 3D Matchings}

Let $G$ be a $r$-uniform hypergraph over $r$ disjoint copies of a domain
$U$.  We only consider $r \in \set{2, 3}$ and use ``2D graph'' to
refer to the case where $r = 2$ and ``3D graph'' to refer to the case
where $r = 3$.  We use the notation $V(G)$ to denote the vertex set of
$G$ and $E(G)$ to denote the edge set of $G$.  We say that $G$ has a
\emph{perfect matching} if there exists $M \sse E(G)$ such that $|M| =
|U|$ and for all distinct pairs of edges $a, b \in M$, $a$ and $b$ are
vertex disjoint, that is, $a_i \neq b_i, \forall i \in [r]$.  Note
that we only consider perfect matchings in this article, so often drop
``perfect'' for brevity.  The \emph{trivial matching} of $G$ is the
set $\condset{u^r}{u \in U}$.  We call a matching $M$
\emph{ nontrivial} if it is not the trivial matching of $H_P$.

For two $r$-partite graphs $G_1, G_2$ over domains $U_1$ and $U_2$,
respectively, we define their \emph{tensor product} to be the
$r$-partite graph $G_1 \times G_2$ over the Cartesian product of their
domain sets $U_1 \times U_2$, and whose edges are the Cartesian
product of their edge sets $$E(G_1) \times E(G_2) =
\condset{((u_1,u_2),(v_1, v_2))}{(u_1, v_1) \in E(G_1), (u_2,v_2) \in
  E(G_2)}.$$ We note that the adjacency matrix of the tensor product
of two $r$-partite graphs is the Kronecker product of the two adjacency
matrices of the graphs; this perspective is helpful in
visualizing some of our results from \autoref{sec:efficient}.

Anderson et al.~showed a reduction from checking whether an \puz{s}{k}
$P$ is an SUSP to deciding whether there are no nontrivial perfect
matchings in a related 3D graph $H_P$ \cite{ajx20}.  We briefly
recall that construction.  Define a function $f$ to represent the
inner condition of \autoref{def:strong} on triplets of rows $u, v, w
\in P$ where $f(u, v, w) = 1$, if $\exists i \in [k]$ such that exactly two
of the following hold: $u_i = 1, v_i = 2, w_i = 3$ and $f(u, v, w) =
0$, otherwise. Then, they define $H_P$ to be the 3D graph with domain
$P$ whose edges are $E(H_P) = \condset{(u, v, w)}{f(u, v, w) = 0}$.
Note that the trivial matching is a matching of $H_P$.

With these definitions in hand, we state the main result of
\cite{ajx23} that we need.
\begin{lemma}[{\cite[Lemma 5]{ajx23}}]
  \label{lem:susp-to-3dm}
A puzzle $P$ is an SUSP iff $H_P$ has no nontrivial perfect
matchings.
\end{lemma}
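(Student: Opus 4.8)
The plan is to prove the biconditional in its contrapositive, symmetric form: I will show that $P$ fails to be an SUSP if and only if $H_P$ admits a nontrivial perfect matching. The bridge between the two sides is a bijection between the perfect matchings of $H_P$ and the pairs of permutations $(\sigma,\tau) \in \Sym{P} \times \Sym{P}$ satisfying $f(u,\sigma(u),\tau(u)) = 0$ for every $u \in P$. Everything else is bookkeeping about which such pairs correspond to the trivial matching and how the three permutations appearing in \autoref{def:strong} collapse to two.

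First I would establish the matching--permutation correspondence. A perfect matching $M$ of $H_P$ is a set of $s = |P|$ pairwise vertex-disjoint edges, so it covers each of the three copies of the domain $P$ exactly once; in particular, for each $u \in P$ there is a unique edge of $M$ whose first coordinate is $u$, which I write as $(u,\sigma(u),\tau(u))$. Vertex-disjointness in the second and third coordinates forces $\sigma,\tau \in \Sym{P}$, and membership of each edge in $E(H_P)$ is exactly the constraint $f(u,\sigma(u),\tau(u)) = 0$. Conversely, any such pair $(\sigma,\tau)$ yields a perfect matching $\condset{(u,\sigma(u),\tau(u))}{u \in P}$, and these two maps are mutually inverse. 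Under this correspondence the trivial matching is precisely the pair $(\mathrm{id},\mathrm{id})$, so nontrivial matchings correspond exactly to pairs $(\sigma,\tau) \neq (\mathrm{id},\mathrm{id})$.

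Next I would rewrite the SUSP condition in the same language. Unwinding the definition of $f$, the statement $f(\pi_1(r),\pi_2(r),\pi_3(r)) = 1$ is exactly condition (ii) of \autoref{def:strong} witnessed at row $r$. Hence $P$ is \emph{not} an SUSP precisely when there exist $\pi_1,\pi_2,\pi_3 \in \Sym{P}$, not all equal, with $f(\pi_1(r),\pi_2(r),\pi_3(r)) = 0$ for every $r \in P$. The step I expect to require the most care is reconciling these three permutations with the two appearing in a matching: here I would reindex by substituting $r = \pi_1^{-1}(u)$, which ranges over $P$ as $u$ does, and set $\sigma = \pi_2\pi_1^{-1}$ and $\tau = \pi_3\pi_1^{-1}$. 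The universally quantified constraint then becomes $f(u,\sigma(u),\tau(u)) = 0$ for all $u \in P$, while the condition ``$\pi_1,\pi_2,\pi_3$ not all equal'' becomes ``$(\sigma,\tau) \neq (\mathrm{id},\mathrm{id})$''. Since $\pi_1$ is arbitrary (and may be taken to be the identity in the reverse direction, recovering any qualifying pair), the existence of a bad triple of permutations is equivalent to the existence of a nontrivial qualifying pair $(\sigma,\tau)$.

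Combining the two equivalences closes the argument: $P$ is not an SUSP iff there is a nontrivial pair $(\sigma,\tau)$ with $f(u,\sigma(u),\tau(u)) = 0$ everywhere, iff $H_P$ has a nontrivial perfect matching; taking contrapositives yields the stated biconditional. The only genuinely delicate point is the three-to-two permutation normalization, since one must verify that composing on the right by $\pi_1^{-1}$ is a bijection of $\Sym{P}$ that leaves the quantified $f$-constraint invariant and sends ``all equal'' to ``identity pair''; the matching--permutation dictionary and the unwinding of $f$ are routine once the definitions are lined up.
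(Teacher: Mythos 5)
Your argument is correct and complete. Note that the paper itself does not prove this lemma---it imports it verbatim from prior work (Lemma 5 of the cited reference)---so there is no in-paper proof to compare against; your reconstruction is the standard one: the dictionary between perfect matchings of $H_P$ and pairs $(\sigma,\tau)$ with $f(u,\sigma(u),\tau(u))=0$ for all $u$, together with the normalization $\sigma=\pi_2\pi_1^{-1}$, $\tau=\pi_3\pi_1^{-1}$ that collapses the three quantified permutations of the SUSP definition to two. All the delicate points are handled: vertex-disjointness does force $\sigma,\tau$ to be bijections, the trivial matching does correspond to the identity pair (and is always present since $f(u,u,u)=0$), and ``not all equal'' translates exactly to $(\sigma,\tau)\neq(\mathrm{id},\mathrm{id})$.
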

In general, deciding whether a 3D graph has a perfect matching is
$\NP$-complete \cite{karp72}.

\section{Simplification and Efficiently-Verifiable SUSPs}
\label{sec:efficient}

The reduction from SUSP verification to the problem of 3D perfect
matching, from \autoref{lem:susp-to-3dm}, leads to a na\"{i}ve
worse-case $O(2^s \cdot \poly{s,k})$-time algorithm for
verification. This approach was not effective in practice, so in
\cite{ajx20}, they solved this 3D perfect matching instance by further
transforming it into a mixed-integer programming problem and then
applying a powerful commercial solver. Here we introduce a useful
subset of SUSPs that are efficiently verifiable to overcome this
limitation.

Let $P$ be an \puz{s}{k} and $H_P$ be its corresponding 3D graph
as in \autoref{lem:susp-to-3dm}. If $H_P$ has a non-trivial matching,
the matching itself witnesses this fact.  However, if $H_P$ has no
non-trivial matchings, there does not need to be a short witness of
this fact (the widely held conjecture that $\NP \neq \coNP$ supports
this view).  The subclass of SUSPs we develop naturally has short
witnesses.

Our approach is based on the following insight about the 3D graph
$H_P$: If $H_P$ has a matching, the matching projects to three 2D
matchings of the 2D faces of $H_P$.  Moreover, if edges in one of the faces
cannot be used for a matching of that face, none of the edges of $H_P$
that project onto that edge can be used in a 3D matching of $H_P$.  We
iteratively apply this idea to efficiently \emph{simplify} the 3D
graph $H_P$, without changing the matchings it has, until it is
reduced to a trivial matching or no further simplification can be
made.  If the 3D graph is reduced to the trivial matching, it means
that $H_P$ had no nontrivial matchings, and the puzzle $P$ must be an
SUSP.  We call such puzzles \emph{simplifiable SUSPs}.  A by-product of this
simplification process is a series of edges deletions of $H_P$, which
provides a witness that $P$ is an SUSP.

\subsection{Simplifying 2D Graphs}

We build up to the simplification of 3D graphs and the definition of
simplifiable SUSPs by first looking at the analogous situation for 2D
graphs.  The following lemma gives a way to remove certain edges from
2D graphs without eliminating matchings.

\begin{figure}[t]
  \begin{center}
    \includegraphics[width=.3\linewidth]{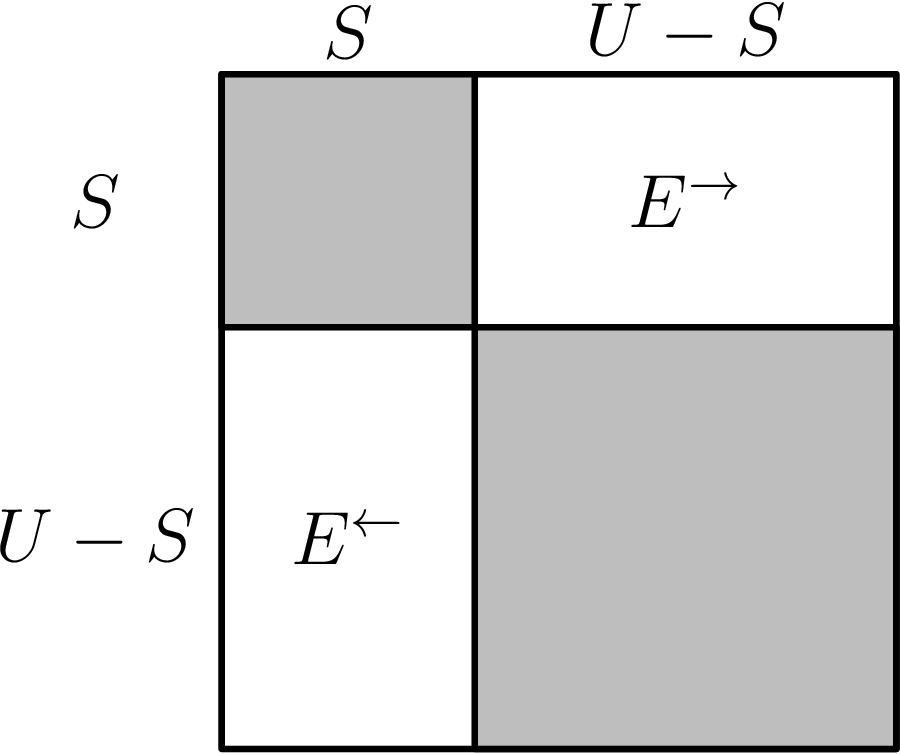}
  \end{center}
  \caption{Let $G$ be 2D graph over the domain $U$.  This diagram
    represents the partitioning of the adjacency matrix of $G$
    relative to a set $S \sse U$ which divides the adjacency matrix
    into four regions of edges, $S \times S$, $S \times (U-S)$, $(U-S)
    \times S$, $(U-S) \times (U-S)$.  The edges in the gray regions
    survive the simplification to $G'$ as in
    \autoref{lem:simplify_set}, while any edges in $E^\rightarrow$ or
    $E^\leftarrow$ are deleted from $G$.\label{fig:simplify_set}}
\end{figure}

\begin{lemma}
  \label{lem:simplify_set}
  Let $G$ be a 2D graph with domain $U$.  Let $S \sse U$, $E^\rightarrow
  = S \times (U - S)$, and $E^\leftarrow =(U - S) \times S$.  Let $G'$
  be a 2D graph with domain $U$ and edges $E(G') = E(G) -
  E^\rightarrow - E^\leftarrow$.  If $E^\rightarrow \cap E(G) =
  \emptyset$ or $E^\leftarrow \cap E(G) = \emptyset$, then $G'$ has the same set of perfect matchings as $G$.
\end{lemma}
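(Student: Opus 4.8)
The plan is to prove the two graphs have identical perfect matching sets by showing mutual containment of their matching families. Since $G'$ is obtained from $G$ by deleting edges, every perfect matching of $G'$ is automatically a perfect matching of $G$; this direction is immediate. The real content is the reverse direction: every perfect matching $M$ of $G$ must avoid the deleted edges entirely, so $M \sse E(G')$ and hence $M$ is also a matching of $G'$.

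\medskip

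For the reverse direction I would argue by contradiction, leveraging the hypothesis that at least one of $E^\rightarrow \cap E(G)$, $E^\leftarrow \cap E(G)$ is empty; without loss of generality suppose $E^\leftarrow \cap E(G) = \emptyset$, so $G$ has no edges from $U - S$ into $S$. The key combinatorial observation is a counting argument on how a perfect matching distributes across the partition induced by $S$. Recall a perfect matching $M$ has $|M| = |U|$ edges that are pairwise vertex-disjoint on both sides, so $M$ is the graph of a bijection from the left copy of $U$ to the right copy of $U$. Under the assumption that no edge of $G$ lies in $E^\leftarrow = (U-S) \times S$, no right-vertex in $S$ can be matched from a left-vertex in $U - S$; since $M$ saturates all $|S|$ right-vertices of $S$, each of them must be matched from a left-vertex in $S$. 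But there are exactly $|S|$ left-vertices in $S$, so all of them are consumed matching the right-vertices of $S$. Consequently no left-vertex of $S$ is available to match a right-vertex of $U - S$, meaning $M$ uses no edge of $E^\rightarrow = S \times (U - S)$ either. Therefore $M$ avoids both $E^\rightarrow$ and $E^\leftarrow$, so $M \sse E(G) - E^\rightarrow - E^\leftarrow = E(G')$, as required.

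\medskip

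I would carry the steps out in this order: first state the easy inclusion (matchings of $G'$ are matchings of $G$) to dispense with one direction; then fix a perfect matching $M$ of $G$ and invoke the bijection viewpoint; then run the pigeonhole/counting argument above to conclude $M$ uses only $S \times S$ and $(U-S) \times (U-S)$ edges. The symmetric case $E^\rightarrow \cap E(G) = \emptyset$ follows by the identical argument with the roles of the two sides of the partition swapped (or by transposing the adjacency matrix), so I would handle it with a one-line ``by symmetry'' remark rather than repeating the computation.

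\medskip

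The main obstacle, and the crux of the whole lemma, is articulating the counting step crisply: the claim that saturating the $|S|$ right-vertices in $S$ forces the use of all $|S|$ left-vertices in $S$ relies on $M$ being a \emph{perfect} matching, i.e. a full bijection, not merely a partial matching. If $M$ were allowed to be partial, a right-vertex of $S$ could simply be left unmatched and the forcing would fail, so the proof must explicitly use $|M| = |U|$. The figure's partition of the adjacency matrix into the four blocks $S \times S$, $S \times (U-S)$, $(U-S) \times S$, $(U-S) \times (U-S)$ makes the argument transparent: emptying one off-diagonal block forces a perfect matching to be block-diagonal, and a block-diagonal permutation matrix cannot have support in either off-diagonal block. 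I expect the write-up to be short, with the care concentrated in stating that $M$ restricted to $S$ on one side is a bijection onto $S$ on the other.
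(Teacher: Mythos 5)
Your proposal is correct and follows essentially the same route as the paper's proof: the easy inclusion from $E(G') \subseteq E(G)$, then the pigeonhole argument that emptiness of one off-diagonal block plus perfectness of $M$ forces $M$ to be block-diagonal with respect to the partition $\{S, U-S\}$, hence disjoint from both $E^\rightarrow$ and $E^\leftarrow$, with the second hypothesis case handled by symmetry. Your explicit emphasis that the counting step needs $|M| = |U|$ (a full bijection) is exactly the point the paper's one-to-one-correspondence sentence is making, just stated more carefully.
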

To get an intuitive sense for why this lemma holds,
\autoref{fig:simplify_set} visualizes the adjacency matrix of $G$,
showing it divided it into four regions depending on $S \sse U$.  If
$G$ has no edges in one of $E^\rightarrow$ or $E^\leftarrow$, any
matching $M$ of $G$ must match $S$ to $S$ and $(U - S)$ to $(U - S)$.
Therefore, dropping $E^\rightarrow$ and $E^\leftarrow$ when constructing
$G'$ does not remove any matchings.
\begin{proof}[Proof of \autoref{lem:simplify_set}]
  Observe that since the edges of $G'$ are a subset of the edges of $G$, $G'$ cannot have a matching that $G$ does not have.  It remains to show that for each perfect matching $M$ of $G$, $M$ is also a
  perfect matching of $G'$.

  Let $M \sse E(G)$ be a perfect matching of $G$.  There are two cases
  to consider. Suppose $E^\rightarrow \cap E(G) = \emptyset$.
  Consider an edge $(u, v) \in M$. If $u \in S$, then $v \notin (U-S)$
  since there are no edges in $G$ that intersect with $S \times
  (U-S)$.  Therefore, $v \in S$.  Thus, for each $u \in S$, $(u, v) \in M$
  and $v \in S$, so $M$ matches $S$ to $S$.  If $u \in (U-S)$ and $(u,
  v) \in M$, then $v \notin S$ since for all $v \in S$ there already
exists a one-to-one correspondence with $u' \in S$ where $(u', v)
  \in M$.
  
  Thus, $M$ must match $S$ to $S$ and match $U-S$ to $U-S$, that is, $M
  \sse (S \times S) \cup ((U - S) \times (U - S))$.  Hence, $M$ must be a perfect matching of $G'$, because $M \cap (E^\rightarrow \cup
  E^\leftarrow) = \emptyset$ and therefore the edges in $M$ are deleted. The case when $E^\leftarrow \cap E(G) = \emptyset$ is symmetric. \qed
\end{proof}
Let $S \sse U$ be a subset of vertices in a 2D graph $G$ with domain $U$
for which the conditions of \autoref{lem:simplify_set} are met.  We
say that $S$ \emph{induces a simplification} of $G$ to $G'$.  We now
consider sequences of such simplifications.

\begin{definition}
  \label{def:simplifies_to}
  Let $G_0, G_1, \ldots, G_\ell$ be a sequence of 2D graphs with a common domain $U$ and let $S_1, S_2, \ldots, S_\ell \sse U$ be sets such that $S_i$ induces a simplification of $G_{i-1}$ to $G_i$ for $1 \le i \le \ell$.  We say that $G_0$ \emph{ simplifies to} $G_\ell$.
\end{definition}
The following is a corollary resulting from repeated application of
\autoref{lem:simplify_set} to the sets and 2D graphs in the above
definition.

\begin{corollary}
  \label{cor:simplifies_to}
  Let $G, G'$ be 2D graphs over the same domain.  If $G$ simplifies to $G'$, then $G$ and $G'$ have the same set of perfect matchings.
\end{corollary}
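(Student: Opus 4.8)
The plan is to prove \autoref{cor:simplifies_to} by a straightforward induction on the length $\ell$ of the simplification sequence that witnesses $G$ simplifies to $G'$. The base case $\ell = 0$ is immediate, since then $G = G'$ and they trivially have the same perfect matchings. For the inductive step, I would fix a sequence $G_0, G_1, \ldots, G_\ell$ with inducing sets $S_1, \ldots, S_\ell$ as in \autoref{def:simplifies_to}, and assume as the inductive hypothesis that $G_0$ and $G_{\ell-1}$ have the same set of perfect matchings. The key observation is that, by the definition of ``induces a simplification,'' the set $S_\ell$ satisfies the hypotheses of \autoref{lem:simplify_set} relative to the graph $G_{\ell-1}$ and produces $G_\ell$.

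The main work is then just an appeal to the already-proved single-step result. I would invoke \autoref{lem:simplify_set} with $G := G_{\ell-1}$, $G' := G_\ell$, and $S := S_\ell$ to conclude that $G_{\ell-1}$ and $G_\ell$ have the same set of perfect matchings. Chaining this with the inductive hypothesis gives that $G_0$ and $G_\ell$ have the same set of perfect matchings, completing the induction. Since $G = G_0$ and $G' = G_\ell$ by hypothesis, this establishes the corollary.

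I do not anticipate a genuine obstacle here, as the corollary is exactly the transitive closure of the single-step statement, and equality of sets is transitive. The one point worth stating carefully is that all graphs in the sequence share the common domain $U$, so that ``perfect matching'' means the same thing (a set of $|U|$ vertex-disjoint edges) at every stage; this is guaranteed by \autoref{def:simplifies_to}, which fixes the common domain. The only mild subtlety is that \autoref{lem:simplify_set} requires each $S_i$ to satisfy its emptiness precondition relative to the \emph{current} graph $G_{i-1}$ rather than the original $G_0$; but this is precisely what ``$S_i$ induces a simplification of $G_{i-1}$ to $G_i$'' encodes, so the hypotheses transfer cleanly at each step and no additional argument is needed.
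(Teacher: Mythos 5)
Your proof is correct and matches the paper's argument: both proceed by induction on the length of the simplification sequence, applying \autoref{lem:simplify_set} at each step to conclude that consecutive graphs $G_{i-1}$ and $G_i$ share the same set of perfect matchings, and then chaining these equalities. Your explicit remarks about the common domain and about each $S_i$'s precondition holding relative to $G_{i-1}$ are sound and only make the paper's terser induction more precise.
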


\begin{proof}
  Suppose $G$ simplifies to $G'$. By \autoref{def:simplifies_to},
  there exists $G_0, G_1, \ldots, G_\ell$ with $G = G_0$ and $G' =
  G_\ell$ and sets $S_1, S_2, \ldots, S_\ell$ for which $S_i$ induces a simplification of $G_{i-1}$ to $G_i$.  Using \autoref{lem:simplify_set}, between $G_{i-1}$ and $G_i$, one can show, by induction, that the set of perfect matchings for all $G_i$
  are the same.  Therefore, $G = G_0$ and $G' = G_\ell$ have the same set
  of perfect matchings. \qed
\end{proof}
The above sequence of arguments can be generalized to show that
simplification can be applied to graphs that are tensor products.
\begin{corollary}
  \label{cor:simplifies_to_prod}
  Let $G, G'$ be 2D graphs over the same domain $U$, and $F$ be a 2D graph over a different domain $V$.  If $G$ simplifies to $G'$, then $G
  \times F$ simplifies to $G' \times F$.
\end{corollary}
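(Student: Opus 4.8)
The plan is to lift each individual simplification step on $G$ to a simplification step on the tensor product $G \times F$, replacing the set $S_i \sse U$ by its cylinder $S_i \times V \sse U \times V$. By \autoref{def:simplifies_to}, since $G$ simplifies to $G'$ there is a sequence $G_0 = G, G_1, \ldots, G_\ell = G'$ together with sets $S_1, \ldots, S_\ell \sse U$ such that $S_i$ induces a simplification of $G_{i-1}$ to $G_i$. I would show that the sequence of tensor products $G_0 \times F, G_1 \times F, \ldots, G_\ell \times F$, together with the lifted sets $T_i = S_i \times V$, witnesses that $G \times F$ simplifies to $G' \times F$.

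The core of the argument is a single lifted step: I claim $T_i$ induces a simplification of $G_{i-1} \times F$ to $G_i \times F$ in the sense of \autoref{lem:simplify_set}. Write $E^\rightarrow = S_i \times (U - S_i)$ and $E^\leftarrow = (U - S_i) \times S_i$ for the faces associated with $S_i$, and let $F^\rightarrow = T_i \times ((U \times V) - T_i)$ and $F^\leftarrow = ((U \times V) - T_i) \times T_i$ be those associated with $T_i$. Since $(U \times V) - T_i = (U - S_i) \times V$, an edge $((u_1,v_1),(u_2,v_2)) \in E(G_{i-1} \times F)$ lies in $F^\rightarrow$ exactly when $u_1 \in S_i$ and $u_2 \in U - S_i$, i.e.\ exactly when $(u_1, u_2) \in E^\rightarrow$; the $F$-coordinate $(v_1, v_2) \in E(F)$ plays no role, and the symmetric statement holds for $F^\leftarrow$ and $E^\leftarrow$.

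Two consequences follow. First, the emptiness hypothesis transfers: if $E^\rightarrow \cap E(G_{i-1}) = \emptyset$ then no edge of $G_{i-1} \times F$ has $G$-coordinate in $E^\rightarrow$, so $F^\rightarrow \cap E(G_{i-1} \times F) = \emptyset$, and symmetrically for the other disjunct; thus the hypothesis of \autoref{lem:simplify_set} is met by $T_i$. Second, deleting $F^\rightarrow$ and $F^\leftarrow$ from $G_{i-1} \times F$ removes precisely those edges whose $G$-coordinate $(u_1, u_2)$ lies in $E^\rightarrow \cup E^\leftarrow$, so the surviving edges are exactly those with $(u_1, u_2) \in E(G_{i-1}) - E^\rightarrow - E^\leftarrow = E(G_i)$ and $(v_1, v_2) \in E(F)$, which is exactly $E(G_i \times F)$. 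Hence $T_i$ induces the simplification of $G_{i-1} \times F$ to $G_i \times F$. Chaining these lifted steps over $i = 1, \ldots, \ell$ and invoking \autoref{def:simplifies_to} gives that $G \times F = G_0 \times F$ simplifies to $G_\ell \times F = G' \times F$.

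I expect the main obstacle to be purely bookkeeping: checking that the edge set obtained by deleting $F^\rightarrow$ and $F^\leftarrow$ coincides with $E(G_i \times F)$ exactly, and that the disjunctive emptiness hypothesis of \autoref{lem:simplify_set} is preserved under lifting. Both reduce to the single observation that membership of a tensor-product edge in the faces of $T_i$ depends only on its $G$-coordinate, so no genuinely new case analysis is needed beyond \autoref{lem:simplify_set} itself.
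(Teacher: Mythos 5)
Your proposal is correct and follows essentially the same route as the paper's proof, which is only a sketch: lift each simplification step by replacing $S_i$ with the cylinder $S_i \times V$ and observe that membership in the lifted faces depends only on the $G$-coordinate. You actually supply more detail than the paper does (the emptiness transfer and the exact identification of the surviving edge set with $E(G_i \times F)$), and those details check out.
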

\begin{proof}[Proof Sketch]
  Let $S_1, S_2, \ldots, S_\ell \sse U$ and $G = G_0, G_1, G_2,
  \ldots, G_\ell = G'$ be the series of sets and graphs that witness $G$ simplifying to $G'$.  One can argue that the sets $S_1 \times V,
  S_2 \times V, \ldots, S_\ell \times V$ induce the corresponding chain of simplifications $G \times F = G_0 \times F, G_1 \times F,
  \ldots, G_\ell \times F = G' \times F$.  The argument for the
individual simplification steps here proceeds analogously to the proof of \autoref{lem:simplify_set}. \qed
\end{proof}
  
\subsection{Simplifying 3D Graphs}

We lift the notion of simplification from 2D graphs to 3D graphs.
Consider a 3D graph $H$ with domain $U$.  We construct three 2D graphs
$R_0, R_1, R_2$, on the same domain $U$, which, respectively, correspond
to projecting out the first, second, and third coordinates of $H$.  In
particular, the edges of these 2D graphs are, respectively,
$$
\begin{aligned}
  E(R_0) &= \condset{(v, w)}{\exists u \in U, (u,
    v, w) \in E(H)}, \\
  E(R_1) &= \condset{(u, w)}{\exists v \in U, (u,
    v, w) \in E(H)}, \\
  E(R_2) &= \condset{(u, v)}{\exists w \in U, (u,
    v, w) \in E(H)}.
\end{aligned}
$$
If $H$ has a perfect matching, then it projects into a perfect matching
for each of the $R_f$'s.  To see this, let $M$ be a perfect matching
of $H$, then following the projection, define $M_0 = \condset{(v,
  w)}{\exists u \in U, (u, v, w) \in M}$.  By definition $M_0 \sse
E(R_0)$.  Because $M$ is a perfect matching of $H$, $\condset{v}{(u, v,
  w) \in M} = \condset{w}{(u, v, w) \in M} = U$, and $|M| = |U|$, so
$M_0$ is a perfect matching of $R_0$.  The argument for $R_1$ and
$R_2$ is analogous.  Furthermore, one can argue that if a matching is
nontrivial for $H$, then it is nontrivial for at least two of the
$R_f$'s.

We observe that simplifications induced on any of $R_0$, $R_1$, $R_2$,
also induce a simplification of $H$.  For brevity, the result below
is stated only for $R_0$, but holds similarly for $R_1$ and $R_2$ using
symmetric arguments.

\begin{lemma}
  \label{lem:simplify_lift}
  Let $H$, $R_0$, $U$ be defined as above.  Let $H'$ be the 3D graph
  over the domain $U$ whose edges are $E(H') = E(H) - U \times ((S
  \times (U - S)) \cup ((U - S) \times S))$.  If $S \sse U$ induces a simplification of $R_0$, then $H'$ has the same set of perfect
  matchings that $H$ does.
\end{lemma}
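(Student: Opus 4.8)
The plan is to reduce the 3D statement to the 2D simplification lemma (\autoref{lem:simplify_set}) via the projection onto the second and third coordinates. As in that proof, one direction is immediate: since $E(H') \sse E(H)$, every perfect matching of $H'$ is already a perfect matching of $H$. All the work is in the converse, showing that no perfect matching of $H$ uses any of the deleted edges, so that each perfect matching of $H$ survives intact into $H'$.

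First I would write $E^\rightarrow = S \times (U - S)$ and $E^\leftarrow = (U - S) \times S$, so that the deleted set is exactly $U \times (E^\rightarrow \cup E^\leftarrow)$; thus an edge $(u, v, w)$ of $H$ is removed precisely when its projection $(v, w)$ onto the last two coordinates lies in $E^\rightarrow \cup E^\leftarrow$. In other words, the 3D deletion is the coordinate-lift of the 2D deletion that produces the simplified graph $R_0'$ from $R_0$ in \autoref{lem:simplify_set}. Isolating this correspondence is the conceptual crux, after which the argument is mechanical.

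Next I would fix a perfect matching $M$ of $H$ and form its projection $M_0 = \condset{(v, w)}{\exists u \in U, (u, v, w) \in M}$. By the projection argument established just before the lemma, $M_0$ is a perfect matching of $R_0$. Since $S$ induces a simplification of $R_0$, \autoref{lem:simplify_set} applies; and, as in its proof, every perfect matching of $R_0$ is contained in $(S \times S) \cup ((U - S) \times (U - S))$, so in particular $M_0 \cap (E^\rightarrow \cup E^\leftarrow) = \emptyset$. Consequently no edge $(u, v, w) \in M$ can have $(v, w) \in E^\rightarrow \cup E^\leftarrow$, so $M$ avoids the deleted set entirely, giving $M \sse E(H')$. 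Hence $M$ is a perfect matching of $H'$, and the two graphs have identical perfect matchings.

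I expect the proof to be essentially routine once the right projection is identified; the only points needing care are verifying that the edge set deleted in the definition of $H'$ is exactly the lift of the 2D deletion relative to $S$, and that the projection of a 3D perfect matching is genuinely a perfect matching of $R_0$ (so that \autoref{lem:simplify_set} can legitimately be invoked) — and this latter fact is already supplied by the surrounding discussion. The symmetric statements for $R_1$ and $R_2$ then follow by permuting the roles of the coordinates.
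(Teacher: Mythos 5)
Your proposal is correct and follows essentially the same route as the paper's proof: project a perfect matching $M$ of $H$ onto $R_0$, note the projection is a perfect matching of $R_0$, and invoke \autoref{lem:simplify_set} to conclude it avoids $(S \times (U-S)) \cup ((U-S) \times S)$, so $M$ avoids the deleted 3D edges. The only cosmetic difference is that the paper argues by contradiction while you argue directly.
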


\begin{proof}
  Observe that since the edges of $H'$ are a subset of the edges of $H$, $H'$ cannot have a matching that $H$ does not have.  It remains to show that for each matching $M$ of $H$, $M$ is also a matching of $H'$.
  
  Let $M$ be a matching of $H$.  Suppose, for the sake of
contradiction, that $M$ is not a matching of $H'$.  There must exist an edge $(u, v, w) \in M$ that lies in the set of edges deleted in $H'$.  Let $M_0$ be the projection of $M$ into $R_0$, so that $M_0$ is a matching of $R_0$ and $(v, w) \in M_0$.  By hypothesis and definition of $H'$, $(v, w) \in (S \times (U - S)) \cup ((U - S) \times S)$.  This is a contradiction to the fact that $S$
  simplifies $R_0$, because, by \autoref{lem:simplify_set}, $(S \times
  (U - S)) \cup ((U - S) \times S)$ does not intersect with any
  matchings of $R_0$. \qed
\end{proof}

When the conditions of \autoref{lem:simplify_lift} are met, we say that
this set $S$ \emph{induces a simplification of $H$ via $R_0$}.  As
before, we can lift the notion of simplification to a series of induced
simplifications.  Here it is more complex because changing $H$ changes
its projections.  Let $S_1, S_2, \ldots, S_\ell \sse U$ and $f_1, f_2,
\ldots, f_\ell \in \set{0, 1, 2}$. We define a series of tuples of
graphs $(H_j, R_{0,j}, R_{1,j}, R_{2,j})$ with $0 \le j \le \ell$,
where $H_0 = H$, $R_{0,0} = R_0, R_{1,0} = R_1, R_{2,0} = R_2$ and for
$j > 0$, $R_{f_j,j}$ is the simplification of $R_{f_j, j-1}$ induced
by $S_j$, $H_j$ is the simplification of $H_{j-1}$ induced by $S_j$
via $R_{f_j}$ and $R_{(f_j + 1 \mod 3), j}$ and $R_{(f_j + 2 \mod 3),
  j}$ are the result of reprojecting $H_j$.  For brevity in describing
this situation, we say that $H$ \emph{simplifies to} $H_\ell$.  As
before, repeated application of \autoref{lem:simplify_lift} and
\autoref{lem:simplify_set} implies that $H_\ell$ has the same set of
matchings as $H_0 = H$ does and results in the following corollary.
\begin{corollary}
  \label{lem:simplify_pm}
  Let $H$, $H'$ be 3D graphs with the same domain.  If $H$ simplifies to $H'$, then $H$ and $H'$ have the same set of perfect matchings.
\end{corollary}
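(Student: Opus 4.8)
The plan is to prove \autoref{lem:simplify_pm} by induction on the length $\ell$ of the simplification sequence that witnesses $H$ simplifying to $H'$, exactly paralleling the proof of the 2D statement in \autoref{cor:simplifies_to} but using \autoref{lem:simplify_lift} as the single-step engine in place of \autoref{lem:simplify_set}. Unfolding the definition given just before the statement, I would fix the data $S_1, \ldots, S_\ell \sse U$, directions $f_1, \ldots, f_\ell \in \set{0,1,2}$, and the chain of tuples $(H_j, R_{0,j}, R_{1,j}, R_{2,j})$ with $H_0 = H$ and $H_\ell = H'$. The goal is to show that $H_{j-1}$ and $H_j$ have the same set of perfect matchings for each $j$, and then chain these $\ell$ equalities to conclude that $H_0 = H$ and $H_\ell = H'$ have the same set of perfect matchings.

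The subtlety, and the step I expect to be the main obstacle, is that invoking \autoref{lem:simplify_lift} at step $j$ requires $S_j$ to induce a simplification of the \emph{actual} projection of $H_{j-1}$ in direction $f_j$, whereas the construction only guarantees that $S_j$ simplifies the \emph{tracked} graph $R_{f_j, j-1}$. I therefore plan to carry an auxiliary invariant through the induction: for every $j$ and every $f \in \set{0,1,2}$, the tracked projection $R_{f,j}$ coincides with the genuine projection of $H_j$ onto the $f$-th face. The base case $j = 0$ holds by definition, since $R_{f,0} = R_f$ is the projection of $H_0 = H$. For two of the three directions at step $j$, namely $(f_j + 1 \bmod 3)$ and $(f_j + 2 \bmod 3)$, the invariant is immediate, because those projections are defined precisely by reprojecting $H_j$.

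The remaining direction $f_j$ is where the work lies. Writing $B = (S_j \times (U - S_j)) \cup ((U - S_j) \times S_j)$ for the bad 2D region, \autoref{lem:simplify_lift} forms $H_j$ from $H_{j-1}$ by deleting every 3D edge whose $f_j$-projection lands in $B$. I would argue that, since the entire fiber above each bad 2D edge is deleted (and no other edges are touched), reprojecting $H_j$ in direction $f_j$ yields exactly $R_{f_j}(H_{j-1}) - B$; on the other hand, the 2D simplification of \autoref{lem:simplify_set} sets $R_{f_j, j} = R_{f_j, j-1} - B$. Combining these with the inductive hypothesis $R_{f_j, j-1} = R_{f_j}(H_{j-1})$ gives $R_{f_j, j} = R_{f_j}(H_j)$, which closes the invariant. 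With the invariant established, the hypothesis of \autoref{lem:simplify_lift} is genuinely met at every step, so each step preserves the set of perfect matchings; chaining these equalities across $j = 1, \ldots, \ell$ then completes the proof.
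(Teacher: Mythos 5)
Your proof is correct and follows essentially the same route as the paper, which justifies this corollary only with the one-line remark that it follows from repeated application of \autoref{lem:simplify_lift} and \autoref{lem:simplify_set} along the chain $H_0, \ldots, H_\ell$. The auxiliary invariant you carry---that the tracked projection $R_{f,j}$ coincides with the genuine projection of $H_j$ in every direction, so that the hypothesis of \autoref{lem:simplify_lift} is actually met at each step---is a detail the paper leaves implicit, and you verify it correctly.
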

Similarly to \autoref{cor:simplifies_to_prod}, the simplification of 3D
graphs lifts to tensor products.
\begin{corollary}
  \label{cor:simplify_3dm_prod}
  Let $H$, $H'$ be 3D graphs with the same domain and $K$ be a 3D graph with a different domain.  If $H$ simplifies to $H'$, then $H
  \times K$ simplifies to $H' \times K$.
\end{corollary}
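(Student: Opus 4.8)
The plan is to lift the witness for ``$H$ simplifies to $H'$'' directly to a witness for ``$H \times K$ simplifies to $H' \times K$,'' mirroring the strategy of \autoref{cor:simplifies_to_prod} in the 2D setting. Let $V$ be the domain of $K$, and unpack the hypothesis via the series of tuples $(H_j, R_{0,j}, R_{1,j}, R_{2,j})$ for $0 \le j \le \ell$ with $H_0 = H$, $H_\ell = H'$, together with sets $S_1, \ldots, S_\ell \sse U$ and projection indices $f_1, \ldots, f_\ell \in \set{0,1,2}$ that witness $H$ simplifying to $H'$. I would propose as the product witness the sets $S_1 \times V, \ldots, S_\ell \times V \sse U \times V$ together with the same indices $f_1, \ldots, f_\ell$, and then show by induction on $j$ that the $j$-th tuple of the resulting product chain is exactly $(H_j \times K,\ R_{0,j} \times R_0(K),\ R_{1,j} \times R_1(K),\ R_{2,j} \times R_2(K))$, where $R_f(K)$ denotes the $f$-th projection of $K$.

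The first ingredient is that projection commutes with tensor product: for every $f \in \set{0,1,2}$ one has $R_f(H \times K) = R_f(H) \times R_f(K)$. This follows by unwinding the definitions---a projected pair lies in $R_f(H \times K)$ exactly when its two coordinate-components witness the corresponding projected pairs in $R_f(H)$ and $R_f(K)$, since the existential quantifier over the projected-out coordinate factors across the two domains. This identity supplies the base case $j = 0$ and guarantees that the reprojection step preserves the claimed product form throughout the induction, since $K$ is never modified.

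The second ingredient handles a single simplification step. At stage $j$ the set $S_j$ induces a simplification of the 2D projection $R_{f_j, j-1}$ to $R_{f_j, j}$; by the argument establishing \autoref{cor:simplifies_to_prod} applied to a single simplification, the set $S_j \times V$ induces a simplification of $R_{f_j, j-1} \times R_{f_j}(K)$ to $R_{f_j, j} \times R_{f_j}(K)$, so the hypothesis of \autoref{lem:simplify_lift} is met for $H_{j-1} \times K$ with the set $S_j \times V$ via the $f_j$-projection. It then remains to verify that the 3D simplification of $H_{j-1} \times K$ induced by $S_j \times V$ via $R_{f_j}$ yields precisely $H_j \times K$. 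This is the crux of the argument and the step I expect to be the main obstacle, as it amounts to matching up the deleted edge sets. The key simplification is the identity $(U \times V) - (S_j \times V) = (U - S_j) \times V$; using it (for the $R_0$ case, say) an edge $((u_1,u_2),(v_1,v_2),(w_1,w_2))$ of $H_{j-1} \times K$ is deleted by the lifted simplification exactly when $(v_1, w_1) \in (S_j \times (U - S_j)) \cup ((U - S_j) \times S_j)$, because the $V$-components $v_2, w_2$ are unconstrained. This is precisely the condition under which $(u_1, v_1, w_1)$ is deleted from $H_{j-1}$ in forming $H_j$, so the surviving edges of the two constructions coincide edge-for-edge, giving $H_j \times K$.

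Assembling these pieces, the induction closes: each stage of the product chain is a valid induced simplification (by the second ingredient) producing the claimed tuple, with the reprojected coordinates remaining in product form (by the first ingredient). At $j = \ell$ the chain reaches $H_\ell \times K = H' \times K$, so $S_1 \times V, \ldots, S_\ell \times V$ and $f_1, \ldots, f_\ell$ witness that $H \times K$ simplifies to $H' \times K$, as desired. \qed
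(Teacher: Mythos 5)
Your proposal is correct and follows exactly the route the paper intends: the paper states this corollary without proof, deferring to the analogy with \autoref{cor:simplifies_to_prod}, and your argument is precisely that analogy carried out in full --- lifting the witness sets to $S_j \times V$, using that projection commutes with tensor product, and checking that the deleted edge sets match step by step. The write-up supplies more detail than the paper does, and the details check out.
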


\subsection{Simplifiable SUSPs}

We now apply the notion of simplification to help in checking whether
an \puz{s}{k} $P$ is an SUSP. By \autoref{lem:simplify_pm},
$H_P$ has a nontrivial matching iff any simplification of $H_P$ has a
nontrivial matching.  This suggests a way to construct a witness that
$P$ is an SUSP: If $H_P$ simplifies to the trivial matching, then, by
\autoref{lem:simplify_pm}, $H_P$ has no nontrivial matchings, and, by
\autoref{lem:susp-to-3dm}, $P$ is an SUSP.  The sequence of sets and
their corresponding projection indexes are a witness that $P$ is an
SUSP.  Moreover, if we exclude simplifications that do not change the
3D graph, the number of edges in the 3D graph---at most $s^3$---is a
limit on the number of simplification steps that can occur.
\begin{definition}[Simplifiable SUSP]
  \label{def:simplifiable}
  An \puz{s}{k} $P$ is a \emph{simplifiable} SUSP if $H_P$ simplifies to the trivial 3D perfect matching.
\end{definition}
By definition, simplifiable SUSP are SUSPs with short ($O(s^4)$ bit length)
witnesses.  To make this definition effective, we describe a
polynomial-time algorithm that simplifies puzzles.  In particular, the
algorithm takes $H_P$; projects it onto its 2D faces, $R_0$, $R_1$,
$R_2$; then, for each face, determines sets that induce maximal
simplification of the faces; and, finally, applies those simplifications to
$H_P$ to form a new 3D graph $H'_P$.  The algorithm repeats this until
a fixed point is reached.  The resulting 3D graph is the fully
simplified version of $H_P$.  If that simplified graph is the
trivial matching, this process witnesses that $P$ is a (simplifiable)
SUSP.  For completeness, this process is described in
\autoref{alg:simplify}.

\begin{algorithm}[t]
  \caption{: \textproc{Simplify}}
  \label{alg:simplify}
  \begin{algorithmic}[1]
    \Require{A 3D graph $H$.}
    \Ensure{A fully-simplified 3D graph.}
    \Function{Simplify}{$H$}
    \State $R_0, R_1, R_2 = \Call {Project}{H}$ \label{algo:simplify:line:proj}
    \State $f \gets 0$
    \State $sinceChange \gets 0$
    \While {$sinceChange < 3$}
    \State $edgesToRemove \gets \Call{CalcEdgesToRemove}{R_f}$ \label{algo:simplify:line:2DM}
    \For{$(u, v) \in edgesToRemove$} \label{algo:simplify:line:update:start}
    \If {$f = 0$} 
    \State Delete all edges $(*, u, v)$ from $H$
    \ElsIf {$f = 1$}
    \State Delete all edges $(u, *, v)$ from $H$
    \ElsIf {$f = 2$} 
    \State Delete all edges $(u, v, *)$ from $H$
    \EndIf
    \EndFor \label{algo:simplify:line:update:end}  
    \If {$edgesToRemove = \emptyset$}
    \State $sinceChange \gets sinceChange + 1$
    \Else 
    \State $sinceChange \gets 0$
    \State $R_0, R_1, R_2 = \Call{Project}{H}$ \label{algo:simplify:line:update-project}
    \EndIf
    \State $f \gets (f + 1)\mod 3$
    \EndWhile
    \State \Return $H$
    \EndFunction
  \end{algorithmic}
\end{algorithm}

In \autoref{alg:simplify}, the subroutine \textproc{Project} takes the
3D graph $H$ and returns three 2D graphs $R_0, R_1, R_2$ that,
respectively, correspond to projecting out the first, second, and
third coordinates of $G$, as defined above.  This subroutine can be
na\"{i}vely implemented in $O(s^3)$ time.

The subroutine \textproc{CalcEdgesToRemove} at Line
\ref{algo:simplify:line:2DM} takes each of the 2D graphs corresponding
to the faces and returns a list of edges that are not used in any
maximum 2D matchings of that face.  This subroutine can be implemented
using the algorithm described in \cite[Algorithm 2]{tassa12} (also,
\cite{regin94}).  Their algorithm works by constructing the strongly
connected components of the input 2D graph $R_f$, when $R_f$ is viewed as
a directed graph over $P$ rather than a bipartite graph over $P \sqcup
P$.  The strongly connected components calculated by this algorithm
inherently partition the vertex set $P = S_1 \cup S_2 \cup \ldots \cup
S_\ell$.

Collapsing the 2D graph $R_f$ down to its strongly connected
components leaves us with a directed graph $G_f$ with $V(G_f) = \set{v_1,
  v_2, \ldots, v_\ell}$ and $E(G_f) = \condset{(v_i, v_j)}{\exists u \in
  S_i, w \in S_j \text{ such that } (u, v) \in E(R_f)}$ with $\ell$
vertices $v_j$, one for each strongly connected component $S_j$.
Furthermore, $G_f$ must be an acyclic graph, otherwise the strongly
connected components would have been larger.  These strongly connected
components are sets that induce the simplification of $R_f$.  Let $v_j$ be
a vertex in $G_f$ that has some incident edges but that has either no
incoming or no outgoing edges.  The latter property is sufficient to
apply \autoref{lem:simplify_set} and implies that $S_j$ induces a
simplification of $R_f$.  Furthermore, this simplification corresponds to
deleting all of the edges of $v_j$ in $G_f$.

This process can be repeated until there are no more edges in
$G_f$. Note that because $G_f$ is acyclic, it will always be possible to
find such a vertex $v_j$ as long as there are edges remaining.  This
series of strongly connected components induces a complete simplification
of $R_f$.  This simplification is used to remove the corresponding
edges in the 3D graph $H$ in Lines
\ref{algo:simplify:line:update:start}-\ref{algo:simplify:line:update:end}.
The 3D graph $H$ is fully simplified when no edge can be removed from
any of the three faces.  By \cite{tassa12}, the remaining edges in
each of the projections $R_f$ are ``maximally matchable'' in that they
used in some perfect matching of $R_f$.  Thus, once this happens,
there can be no additional sets that can induce simplifications in any
of the $R_f$ that remove edges in $R_f$ (or in $H$).

Since each edge of $H$, except for the diagonal, can be removed at
most once, the algorithm must reach a fixed point within $3(|P|^3 -
|P|)$ iterations of the main loop.  The cost to update $H$ and the
projections in Lines
\ref{algo:simplify:line:update:start}-\ref{algo:simplify:line:update:end}
\& \ref{algo:simplify:line:update-project} can be amortized, with
careful bookkeeping, to cost $O(|P|^3)$ across the whole algorithm.

For 2D graphs whose domain is the \puz{s}{k} $P$, the subroutine of
\cite{tassa12} runs in $O(s^{2.5}/\sqrt{\log s})$ time.  Combining
the above analysis, the overall complexity of $\textproc{Simplify}$ is
$O(s^3 + s^3 \cdot s^{2.5} / \sqrt{\log s}) = O(s^{5.5}/\sqrt{\log
  s})$.  The results of the above arguments can be summarized in the
following lemma.
\begin{lemma}
  \label{lem:simplify}
  Let $H$ be a 3D graph over $P$.  In $\poly{|P|}$ time,
  $\Call{Simplify}{H}$ computes the complete simplification of $H$.
  Therefore, $H$ has the same set of matchings as $\Call{Simplify}{H}$.
\end{lemma}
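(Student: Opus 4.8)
The plan is to establish \autoref{lem:simplify} in three parts: (i) every edge deletion performed by \textproc{Simplify} is a legitimate 3D simplification, so that the set of perfect matchings is preserved; (ii) the algorithm terminates and its output admits no further simplification, i.e.\ it is the \emph{complete} simplification of $H$; and (iii) the whole computation runs in $\poly{|P|}$ time. Parts (i) and (ii) are correctness claims I would reduce to the lemmas already proved, whereas (iii) is a bookkeeping argument that assembles the per-step costs described alongside \autoref{alg:simplify}.

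For the matching-preservation claim I would argue that each deletion the algorithm performs is an induced simplification. On a given pass, \textproc{CalcEdgesToRemove} processes one face $R_f$ and, via the strongly-connected-component decomposition of \cite{tassa12,regin94}, returns the edges of $R_f$ lying in no maximum matching; such edges lie in no perfect matching either, and each is eliminated by a source or sink strongly connected component $S_j$ which, by \autoref{lem:simplify_set}, induces a simplification of $R_f$. Deleting the corresponding fibers $(*,u,v)$, $(u,*,v)$, or $(u,v,*)$ from $H$ is then precisely the induced simplification of $H$ via $R_f$ guaranteed by \autoref{lem:simplify_lift}. Hence the entire execution realizes a valid sequence of 3D simplifications, and \autoref{lem:simplify_pm} yields that $H$ and \textproc{Simplify}$(H)$ have the same set of perfect matchings.

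For termination and completeness, two observations suffice. Since edges of $H$ are only ever deleted, on any pass that removes an edge the non-diagonal edge count strictly drops; as there are at most $|P|^3 - |P|$ non-diagonal edges, at most $3(|P|^3 - |P|)$ passes are productive, while the $sinceChange$ counter halts the loop after three consecutive unproductive passes. At that halt the three passes in question cycle through $f = 0,1,2$ and each reports no removable edge on the current $H$, with no reprojection occurring on an unproductive pass, so the faces are unchanged between them. By the maximal-matchability guarantee of \cite{tassa12}, each face then has its collapsed graph $G_f$ free of cross-component edges, so no set $S$ can satisfy the hypothesis of \autoref{lem:simplify_set} productively: a union of whole components removes nothing, while a set cutting a strongly connected component violates both emptiness conditions. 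Therefore no face, and hence $H$ itself, admits any further simplification, and the output is the complete simplification.

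The time bound then assembles the per-step costs: \textproc{Project} costs $O(|P|^3)$, each \textproc{CalcEdgesToRemove} call costs $O(|P|^{2.5}/\sqrt{\log|P|})$ by \cite{tassa12}, there are $O(|P|^3)$ passes, and the incremental edge-deletion and reprojection work amortizes to $O(|P|^3)$ overall, giving $O(|P|^{5.5}/\sqrt{\log|P|}) = \poly{|P|}$. The step I expect to be the main obstacle is completeness: one must be certain that the per-face fixed point certified by three consecutive no-change passes coincides with the \emph{global} complete simplification of $H$, even though deleting edges in one face alters the projections of the other two. The crux is to show that ``eliminable by set-induced simplifications of $R_f$'' agrees with ``in no maximum matching of $R_f$'' (what \cite{tassa12} computes), and that reprojecting after each change keeps this agreement consistent across the three interacting faces until a simultaneous fixed point is reached.
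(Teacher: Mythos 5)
Your proposal is correct and follows essentially the same route as the paper, which establishes this lemma through the discussion preceding it rather than a formal proof: deletions are justified as induced simplifications via \autoref{lem:simplify_set} and \autoref{lem:simplify_lift}, completeness follows from the maximal-matchability guarantee of \cite{tassa12} once three consecutive passes remove nothing, termination from the bound of $3(|P|^3-|P|)$ iterations, and the running time from the same amortized accounting yielding $O(|P|^{5.5}/\sqrt{\log |P|})$. The completeness concern you flag at the end is resolved exactly as you suggest and as the paper does, by noting that surviving edges in each face all participate in some perfect matching of that face, so no set can induce a further productive simplification.
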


By \autoref{def:simplifiable}, the 3D graph $H_P$ associated with a
simplifiable SUSP $P$ simplifies to the trivial matching.  Furthermore, by
\autoref{lem:simplify}, $\Call{Simplify}{H_P}$ computes, in polynomial
time, the complete simplification of $H_P$, preserving the matchings.  These
two facts imply a polynomial-time algorithm to determine whether a
puzzle $P$ is a simplifiable SUSP.

\simplifyefficient*

\begin{proof}
Perform the polynomial-time reduction from SUSP verification to 3D
matching of \cite{ajx20} to produce the 3D graph $H_P$ in
time $\poly{s,k}$.  Compute $H_P' = \Call{Simplify}{H_P}$ in time
$\poly{s}$.  In time $O(s^3)$ verify and return whether or not $H_P'$
is the trivial matching $\condset{(u, u, u)}{u \in P}$.  The
algorithm is correct by \autoref{lem:susp-to-3dm} and
\autoref{lem:simplify}. \qed
\end{proof}

It is clear from the construction that simplifiable SUSPs are a subset of
SUSPs, but simplifiable SUSPs are also a generalization of the notion of
local SUSPs from \cite{cksu05}.

\begin{lemma}
  \label{lem:local-is-simplifiable}
  Every local SUSP $P$ is a simplifiable SUSP.
\end{lemma}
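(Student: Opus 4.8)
The plan is to prove something slightly stronger than bare simplifiability: for a local SUSP, the 3D graph $H_P$ is \emph{already} the trivial matching, and therefore simplifies to the trivial matching via the empty (zero-step) simplification sequence of \autoref{def:simplifies_to}. The crux is to reconcile the combinatorial set $\mathcal{L}$ of \autoref{def:local} with the function $f$ used to define the edges of $H_P$.

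First I would observe that $\mathcal{L}$ is exactly the set of column-triples $(a,b,c) \in [3]^3$ for which \emph{exactly two} of the three conditions $a = 1$, $b = 2$, $c = 3$ hold. This is a finite check: the six elements of $\mathcal{L}$ split into the three pairs $\set{(1,2,1),(1,2,2)}$ (where $a=1$, $b=2$, but $c \neq 3$), $\set{(1,1,3),(1,3,3)}$ (where $a=1$, $c=3$, but $b \neq 2$), and $\set{(2,2,3),(3,2,3)}$ (where $b=2$, $c=3$, but $a \neq 1$), which enumerate precisely the ways to satisfy two of the conditions while violating the third.

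With this identification, the function $f$ from the construction of $H_P$ satisfies $f(u,v,w) = 1$ if and only if there is a column $c \in [k]$ with $(u_c, v_c, w_c) \in \mathcal{L}$. Hence the defining property of a local SUSP---that every triple $(u,v,w) \in P^3$ that is not all-equal has some column landing in $\mathcal{L}$---says exactly that $f(u,v,w) = 1$, i.e.\ $(u,v,w) \notin E(H_P)$, for every non-all-equal triple. Conversely, a diagonal triple $(u,u,u)$ always has $f(u,u,u) = 0$, since a single symbol $u_i$ cannot simultaneously equal two distinct values among $\set{1,2,3}$, so no column can satisfy two of the conditions. Therefore $E(H_P) = \condset{(u,u,u)}{u \in P}$, which is precisely the trivial matching.

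Finally I would invoke \autoref{def:simplifiable}: since $H_P$ equals the trivial matching, it simplifies to the trivial matching (with $\ell = 0$, taking the empty sequence of inducing sets, which satisfies \autoref{def:simplifies_to} vacuously), and so $P$ is a simplifiable SUSP. I expect no genuine obstacle here; the only delicate point is the finite verification that $\mathcal{L}$ coincides with the ``exactly two conditions'' column-triples, after which the conclusion is immediate. As a consistency check, this argument also re-derives the known fact that local SUSPs are SUSPs, in agreement with \autoref{lem:susp-to-3dm}.
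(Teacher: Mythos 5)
Your proof is correct and follows essentially the same route as the paper's: both show that the local-SUSP condition forces every non-diagonal triple out of $E(H_P)$, so $H_P$ is already the trivial matching and satisfies \autoref{def:simplifiable} with zero simplification steps. Your explicit verification that $\mathcal{L}$ coincides with the ``exactly two of $a=1$, $b=2$, $c=3$'' triples, and your separate treatment of the diagonal, merely spell out details the paper leaves implicit.
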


\begin{proof}
  By \autoref{def:local}, for every triple of rows $u,v,w \in P$,
  there is a column $c$ such $(u_c, v_c, w_c) \in \mathcal{L}$.  This implies, by the construction of $H_P$, that $(u, v, w)$ is not an edge in $H_P$.  Taken together, this implies that $H_P$ has no edges except where $u = v = w$.  Therefore, $H_P$ is the trivial matching and explicitly satisfies \autoref{def:simplifiable} without taking any simplification steps.  We conclude that $P$ is a simplifiable SUSP. \qed
\end{proof}

Intuitively, simplifiable SUSPs are an intermediate class between
local SUSPs and SUSPs.  The sets containments are proper. There exist
SUSPs that are not simplifiable and simplifiable SUSPs that are not
local.  For example, $P_1 = \set{2233, 1232, 1123, 3311}$ is an SUSP,
but it is not a simplifiable SUSP, and $P_2 = \set{11, 23}$ is a
simplifiable SUSP, but it is not a local SUSP.  Simplifiable SUSPs
have the efficient verification of local SUSPs, but the compactness of
representation of general SUSPs--these two properties make the
prospect of searching for large simplifiable SUSPs more feasible.

\section{Simplifiable SUSPs Generate Infinite Families of SUSPs}
\label{sec:infinite-families}

\newcommand{\conc}{\circ}

We show that simplifiable SUSPs have an additional useful property,
also common to local SUSPs: They induce infinite families of SUSPs
without a loss in capacity.

\subsection{Puzzle \& Family Capacity}

As mentioned in \autoref{sec:prelim}, Cohn et al.~derived bounds for
the running time of matrix multiplication using infinite families of
SUSPs (\autoref{lem:family-omega}) and individual SUSPs
(\autoref{lem:puzzle-omega}).

The first bound produces stronger results than the second.  To see
this, we define the \emph{capacity} of an \susp{s}{k} $P$ to be $C_P =
s^{\frac{1}{k}}$, this is analogous to the definition of capacity for
families of SUSPs mentioned in \autoref{sec:prelim}.  Now,
consider an SUSP $P$ and an infinite family $\mathcal{F}$ with the
same capacity $C_P = C_\mathcal{F}$.  \autoref{lem:puzzle-omega} gives
a weaker upper bound on $\omega$ for the single puzzle than
\autoref{lem:family-omega} does for the infinite family.  For example,
a \susp{14}{6} has capacity $14^{\frac{1}{6}}$ and the bound on
$\omega$ from \autoref{lem:puzzle-omega} using the dimensions of the puzzle
is $\omega \le 2.73$ and although \autoref{lem:family-omega} does
not apply, if we were to use the capacity of the puzzle instead of its
dimensions, we get $\omega \le 2.52$.  The difference between 2.73
and 2.52 is substantial considering the historical progress on
$\omega$.

\subsection{Generating Infinite Families}

We show that simplifiable SUSPs can be turned into an infinite
family of simplifiable SUSPs by taking Cartesian products (powers) of
$P$ with itself.  The resulting family has the same capacity as $P$.
This allows \autoref{lem:family-omega} to be applied, instead of
\autoref{lem:puzzle-omega}, to produce a bound on $\omega$ using the
capacity of $P$.  This reduces the gap described above so that a
simplifiable \susp{14}{6} implies $\omega \le 2.52$, instead of
$\omega \le 2.73$.

We now spell out the construction in more detail.  Let $P_1$ be an
\puz{s_1}{k_1} and $P_2$ be an \puz{s_2}{k_2}.  We define the product
of $P_1$ and $P_2$ to be the Cartesian product of their underlying
sets: $P_1 \times P_2 = \condset{r_1 \conc r_2}{r_1 \in P_1, r_2 \in
  P_2}$.  Observe that $P_1 \times P_2$ is an \puz{s_1 \cdot s_2}{k_1
  + k_2}.  Furthermore, if $P$ is an \puz{s}{k}, its $m$-th power is the
Cartesian product of $P$ with itself $m$ times, $P^m$, and observe
that this is an \puz{s^m}{k \cdot m}.  For a puzzle $P$, we can define
the infinite family $\mathcal{F}_P = \condset{P^m}{m \in \mathbb{N}}$.
Observe that $\mathcal{F}_P$ has capacity $(s^m)^{\frac{1}{k \cdot m}}
= s^\frac{1}{k}$ matching the capacity of $P$.

\begin{definition}
  An SUSP $P$ \emph{generates an infinite family of SUSPs}, if every puzzle in $\mathcal{F}_P$ is an SUSP.
\end{definition}

Unfortunately, the SUSP property is not generally preserved under
Cartesian product or powering.  For example, $P = \set{2233, 1232,
  1123, 3311}$ is an SUSP, but $P \times P$ is not.  This is a
minimum-size counterexample---there is no SUSP $P'$ with fewer rows or
columns than four where $P'^2$ is not an SUSP.  Note that we determined
this by exhaustively searching for such SUSP.  A consequence of this is
that not every SUSP generates an infinite family of SUSPs.  Although
SUSPs are generally not closed under powering, we show that
simplifiable SUSPs are.  The proof is a direct consequence of
\autoref{def:simplifiable} and \autoref{cor:simplify_3dm_prod}.

\begin{lemma}
\label{lem:simplifiable-susp-closure}
Let $P_1, P_2$ be simplifiable SUSPs, then $P_1 \times P_2$ is a
simplifiable SUSP.
\end{lemma}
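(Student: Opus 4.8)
The plan is to reduce the statement about the product puzzle $P_1 \times P_2$ to a statement about the tensor product of the corresponding 3D graphs, and then apply the tensor-product simplification machinery already established in \autoref{cor:simplify_3dm_prod}. The first step is to establish the key structural fact linking the product of puzzles to the tensor product of 3D graphs: namely, that $H_{P_1 \times P_2} = H_{P_1} \times H_{P_2}$. This is the crux of the argument and, I expect, the main obstacle. To see why it should hold, recall that an edge $(u,v,w)$ is absent from $H_P$ exactly when the function $f(u,v,w) = 1$, i.e., there is some column $i$ where exactly two of $u_i = 1$, $v_i = 2$, $w_i = 3$ hold. For rows $r_1 \conc r_2$ of the product puzzle, a ``witnessing'' column lives either in the first $k_1$ columns (coming from $P_1$) or the last $k_2$ columns (coming from $P_2$). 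I would argue that $f(r_1 \conc s_1, r_2 \conc s_2, r_3 \conc s_3) = 0$ if and only if both $f(r_1, r_2, r_3) = 0$ and $f(s_1, s_2, s_3) = 0$, which is precisely the condition that $(r_1\conc s_1, r_2\conc s_2, r_3\conc s_3)$ is an edge of $H_{P_1} \times H_{P_2}$ under the tensor-product edge definition from \autoref{sec:prelim}.

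Once that correspondence is in hand, the remainder is a short chain of implications. Since $P_1$ is a simplifiable SUSP, \autoref{def:simplifiable} gives that $H_{P_1}$ simplifies to the trivial matching of $H_{P_1}$. Applying \autoref{cor:simplify_3dm_prod} with $K = H_{P_2}$, I conclude that $H_{P_1} \times H_{P_2}$ simplifies to $T_1 \times H_{P_2}$, where $T_1$ denotes the trivial matching over the domain of $P_1$. I would then note that $T_1 \times H_{P_2}$ is isomorphic to a disjoint collection of copies of $H_{P_2}$ indexed by the diagonal of $P_1$ — more precisely, its only surviving structure is $|P_1|$ independent copies of $H_{P_2}$, each sitting over a fixed first-coordinate $u \in P_1$. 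Since $P_2$ is also simplifiable, each such copy $H_{P_2}$ simplifies to its own trivial matching; applying \autoref{cor:simplify_3dm_prod} again (this time with the roles reversed) reduces $T_1 \times H_{P_2}$ to $T_1 \times T_2$, which is exactly the trivial matching of $H_{P_1 \times P_2}$.

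The step I expect to require the most care is verifying that simplification can be applied to the second factor after the first has been collapsed, since the product $T_1 \times H_{P_2}$ is not literally of the form ``$G$ times a fixed $K$'' with $G$ simplifying — rather it is $H_{P_2}$ tensored against a trivial matching. I would handle this by invoking the symmetric version of \autoref{cor:simplify_3dm_prod} (the corollary is stated for simplifying the left factor, but the tensor product is symmetric up to relabeling coordinates, so the analogous statement for the right factor holds by the same argument). Composing the two simplification sequences — first collapsing the $P_1$ component, then the $P_2$ component — and concatenating their witnessing sequences of sets and projection indices yields a single simplification of $H_{P_1 \times P_2}$ down to $T_1 \times T_2 = \{(u \conc u', u \conc u', u \conc u') : u \in P_1, u' \in P_2\}$, the trivial matching of $P_1 \times P_2$. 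By \autoref{def:simplifiable}, this establishes that $P_1 \times P_2$ is a simplifiable SUSP, completing the proof.
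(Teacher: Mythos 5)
Your proposal is correct and follows essentially the same route as the paper's proof: identify $H_{P_1 \times P_2}$ with $H_{P_1} \times H_{P_2}$, then apply \autoref{cor:simplify_3dm_prod} twice to collapse each factor to its trivial matching. You simply supply more detail than the paper does on the two points it glosses over --- the column-splitting argument for why the puzzle-to-graph map is a homomorphism, and the need for the symmetric (right-factor) form of the corollary in the second application --- both of which are handled correctly.
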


\begin{proof}
  We first note that the transformation of puzzles to 3D graphs is a homomorphism, i.e., $H_{P_1 \times P_2} = H_{P_1} \times H_{P_2}$.
  By \autoref{def:simplifiable} and since $P_1$ and $P_2$ are simplifiable SUSPs, $H_{P_1}$ simplifies to the trivial matching
  $M_1 = \condset{(u,u,u)}{u \in P_1}$, and $H_{P_2}$ simplifies to
  the trivial matching $M_2 = \condset{(u, u, u)}{u \in P_2}$.  In two
  applications of \autoref{cor:simplify_3dm_prod}, we can simplify
  $H_{P_1 \times P_2} = H_{P_1} \times H_{P_2}$ to $M_1 \times
  H_{P_2}$, then to $M_1 \times M_2$.  Finally, we observe that $M_1
  \times M_2$ is the trivial matching of $H_{P_1 \times P_2}$, therefore, $H_{P_1 \times P_2}$ simplifies to the trivial matching.  Therefore,
  by \autoref{def:simplifiable}, $P_1 \times P_2$ is a simplifiable
  SUSP. \qed
\end{proof}
As an easy corollary, simplifiable SUSP generate infinite families.
\begin{corollary}
  Let $P$ be a simplifiable SUSP, $P$ generates an infinite family of simplifiable SUSPs.
\end{corollary}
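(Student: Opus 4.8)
The plan is to prove the slightly stronger statement that \emph{every} puzzle in the family $\mathcal{F}_P = \{P^m \mid m \in \mathbb{N}\}$ is a simplifiable SUSP, and then invoke the fact that simplifiable SUSPs are a subset of SUSPs to conclude that $P$ generates an infinite family of SUSPs. The argument is a straightforward induction on the power $m$, with the closure result \autoref{lem:simplifiable-susp-closure} doing essentially all of the work in the inductive step.

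For the base case I would take $m = 1$, where $P^1 = P$ is a simplifiable SUSP by the hypothesis of the corollary. For the inductive step, assuming $P^m$ is a simplifiable SUSP, I would write $P^{m+1} = P^m \times P$ as a Cartesian product of two simplifiable SUSPs---namely $P^m$ (by the induction hypothesis) and $P$ (by hypothesis)---and apply \autoref{lem:simplifiable-susp-closure} to conclude that $P^{m+1}$ is again a simplifiable SUSP. This establishes that $P^m$ is a simplifiable SUSP for all $m \ge 1$.

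To finish, I would recall that simplifiable SUSPs form a subset of SUSPs, as observed immediately after \autoref{thm:simplify}, so each $P^m \in \mathcal{F}_P$ is in particular an SUSP. By the definition of generating an infinite family of SUSPs, this is exactly the desired conclusion; in fact the same argument shows that $P$ generates an infinite family of \emph{simplifiable} SUSPs, a marginally stronger statement.

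I expect essentially no obstacle here: the entire content is carried by \autoref{lem:simplifiable-susp-closure}, and the only minor care needed is in the bookkeeping of the induction---specifically writing $P^{m+1}$ as $P^m \times P$ so that both factors are already known to be simplifiable SUSPs, rather than attempting to handle all $m$ copies simultaneously. One should also note that the capacity claim for the family, $C_{\mathcal{F}_P} = C_P = s^{1/k}$, was already verified in the preceding discussion of $\mathcal{F}_P$, so no additional capacity computation is required to apply \autoref{lem:family-omega} downstream.
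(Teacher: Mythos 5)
Your proof is correct and matches the paper's intent exactly: the paper presents this as an ``easy corollary'' of \autoref{lem:simplifiable-susp-closure}, and the induction on $m$ with the decomposition $P^{m+1} = P^m \times P$ is precisely the argument being invoked. No issues.
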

Combining this corollary with \autoref{lem:family-omega} we produce a
tighter bound on $\omega$ from simplifiable SUSPs, which proves our
main theorem, restated below.
\simplifiableomega*
Although it is not the case that every SUSP generates an infinite
family, there is evidence in both experimental results of \cite{ajx20,ajx23}
and some of the puzzle constructions of \cite{cksu05} that there are
(non-local) SUSP of maximum size for their width that generate
infinite families.  For example, \cite[Proposition 3.1]{cksu05} gives
an infinite family with capacity $\sqrt{2}$ that is generated by the
\susp{2}{2} $\set{12, 33}$.

Finally, we argue that the consideration of simplifiable SUSPs does
not inherently lead to weaker bounds on $\omega$ than SUSPs.

\begin{lemma}
  \label{lem:simplifiable-achieve-susp-capacity}
  The SUSP capacity is achieved by SUSPs that are simplifiable.
\end{lemma}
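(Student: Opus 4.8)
We must show the SUSP capacity is achieved by simplifiable SUSPs.

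The plan is to chain together two results already established in the paper, without constructing anything new. The engine of the argument is the ``moreover'' clause of Proposition~\ref{prop:susp-to-local}, which asserts that the SUSP capacity is achieved by local SUSPs, together with \autoref{lem:local-is-simplifiable}, which shows that every local SUSP is itself a simplifiable SUSP. Since the class of simplifiable SUSPs sits \emph{between} local SUSPs and general SUSPs in the containment hierarchy, capacity should be squeezed from both sides.

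Concretely, I would first set $C$ to be the SUSP capacity and invoke Proposition~\ref{prop:susp-to-local} to obtain, for infinitely many widths $k$, local SUSPs of size $(C - o(1))^k$, so that the capacities of these local SUSPs approach $C$. Next I would apply \autoref{lem:local-is-simplifiable} to each puzzle in this witnessing sequence: each such local SUSP is a simplifiable SUSP of the same size and width, so the very same sequence now witnesses that simplifiable SUSPs attain capacity arbitrarily close to $C$. This gives one inequality: the capacity achievable by simplifiable SUSPs is at least $C$.

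For the reverse inequality I would observe that every simplifiable SUSP is in particular an SUSP---this follows immediately from \autoref{def:simplifiable} together with \autoref{lem:susp-to-3dm}, since a puzzle whose $3$D graph simplifies to the trivial matching has no nontrivial matchings and hence is an SUSP. Consequently the capacity achievable by the subclass of simplifiable SUSPs cannot exceed the SUSP capacity $C$. Combining the two bounds yields that the simplifiable-SUSP capacity equals $C$, which is exactly the claim.

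I do not expect a genuine technical obstacle here, as the statement is essentially a corollary of the two cited results; the one point requiring care is the appeal to Proposition~\ref{prop:susp-to-local}. The explicit conversion it gives sends an \susp{s}{k} to a local \susp{s!}{sk}, which \emph{blows up} the parameters and therefore does not by itself preserve capacity (the capacity $s^{1/k}$ is not matched by $(s!)^{1/(sk)}$). So the argument must rely on the separate capacity-preservation assertion in that proposition's ``moreover'' clause rather than on its explicit construction; that subtlety is the only place where one could go wrong.
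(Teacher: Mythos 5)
Your proposal is correct and follows essentially the same route as the paper: invoke the ``moreover'' clause of Proposition~\ref{prop:susp-to-local} to get that local SUSPs achieve the SUSP capacity, then apply \autoref{lem:local-is-simplifiable} to conclude the same for simplifiable SUSPs. Your added care about the parameter blow-up of the explicit $(s!,sk)$ conversion is well placed and matches the paper's own remark immediately following the lemma.
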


\begin{proof}
  By \autoref{lem:local-is-simplifiable}, every local SUSP is a simplifiable SUSP.  By \autoref{prop:susp-to-local}, the SUSP capacity is achieved by local SUSP, and hence the SUSP capacity is also achieved by simplifiable SUSPs. \qed
\end{proof}

Since we are using \autoref{prop:susp-to-local} to construct a (local)
simplifiable SUSP from an SUSP, the size of the simplifiable SUSP is
much larger than the SUSP.  We conjecture that there is a much
tighter relationship between the sizes of simplifiable SUSPs and
SUSPs.

\begin{conjecture}
  \label{conj:size}
  If there exists an \susp{s}{k}, there exists a simplifiable \susp{s}{k}.
\end{conjecture}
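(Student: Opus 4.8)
The plan is to prove the existence of a simplifiable \susp{s}{k} directly, rather than to transform a given SUSP into a simplifiable one of the same parameters. This framing seems essential, because the only parameter-preserving operations available---permuting columns, or applying the global symbol/role symmetries that preserve the SUSP property---appear too weak to force simplifiability, while the localization of \autoref{prop:susp-to-local} inflates $(s,k)$ all the way to $(s!, sk)$. I would set up an extremal/descent argument over the finite set of all \susps{s}{k}. For an SUSP $P$, let $\Phi(P)$ denote the number of off-diagonal edges remaining in $\textproc{Simplify}(H_P)$; by \autoref{lem:simplify_pm} and \autoref{def:simplifiable}, $P$ is a simplifiable SUSP exactly when $\Phi(P) = 0$. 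Assuming an \susp{s}{k} exists at all, I would choose $P^\ast$ minimizing $\Phi$ among all \susps{s}{k}, and aim to show $\Phi(P^\ast) = 0$.

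The heart of the argument is a local-repair step. Suppose $\Phi(P^\ast) > 0$, so that $\textproc{Simplify}(H_{P^\ast})$ contains a surviving off-diagonal edge $(u, v, w)$; by \autoref{lem:susp-to-3dm} together with the soundness of simplification, this edge lies in no nontrivial $3$D matching, yet it is ``$2$D-matchable'' in every projection and so escapes deletion by \autoref{alg:simplify}. The goal is to perturb $P^\ast$ at a single (row, column) position so that the inner condition $f$ of \autoref{def:strong} fires on $(u,v,w)$---deleting that edge from $H_P$---while (a) preserving $|P| = s$ and the width $k$, (b) not creating any nontrivial $3$D matching, so that the perturbed puzzle $P'$ is still an SUSP, and (c) yielding $\Phi(P') < \Phi(P^\ast)$ after re-running simplification, contradicting minimality. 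Here (a) is immediate, since changing one symbol alters neither $s$ nor $k$; the work is in (b) and (c).

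I expect (b)--(c) to be the main obstacle, and indeed to be the reason the statement is only conjectured. The difficulty is twofold. First, changing a single entry of $P^\ast$ is \emph{not} monotone on $H_P$: it removes every edge on which the altered position now makes $f$ fire, but it can simultaneously \emph{add} edges on triples for which the old symbol had been the sole witness, so preserving the SUSP property in (b) cannot be argued by a simple ``we only deleted edges'' observation and instead requires a delicate choice of perturbation. Second, $\textproc{Simplify}$ is a global, iterative fixed-point computation whose output depends on the strongly connected component structure of all three projections at once, so bounding $\Phi(P')$ in terms of $\Phi(P^\ast)$ in (c) would need a monotonicity or potential argument for simplification under edge changes that is not obviously available. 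A sensible fallback is to retreat to the asymptotic guarantee already supplied by \autoref{lem:simplifiable-achieve-susp-capacity} and the product closure of \autoref{lem:simplifiable-susp-closure}, and to confirm \autoref{conj:size} computationally for the small widths reached by the search; a full proof, however, seems to demand a structural characterization of exactly which puzzle hypergraphs exhibit a gap between the relaxation witnessed by simplification and genuine $3$D matchings.
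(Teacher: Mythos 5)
This statement is \autoref{conj:size}, which the paper leaves as an open conjecture: there is no proof to compare against. The paper's only supporting material is experimental (no $(s,k)$ has been found admitting an SUSP but no simplifiable SUSP) plus the explicit remark that \autoref{lem:simplifiable-achieve-susp-capacity} does \emph{not} imply the conjecture because the localization of \autoref{prop:susp-to-local} inflates the parameters from $(s,k)$ to $(s!,sk)$. Your proposal correctly recognizes all of this, and to your credit you do not claim to have a proof: you sketch an extremal/descent strategy (minimize the number of surviving off-diagonal edges after \textproc{Simplify} over all \susps{s}{k} and try a local repair) and then honestly identify where it breaks.

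The obstacles you name are genuine and well chosen. Changing a single entry of a puzzle is not monotone on $H_P$ --- it can add edges on triples for which the altered symbol was the unique witness of the inner condition of \autoref{def:strong} --- so preserving the SUSP property under perturbation has no easy argument. And the output of \textproc{Simplify} depends on the global strongly-connected-component structure of all three projections, so there is no obvious potential function showing that deleting one surviving edge decreases $\Phi$ after re-simplification. So the honest verdict is: your proposal is not a proof, but the paper has no proof either; your analysis is consistent with the paper's own assessment of why the conjecture is open, and your identification of the two failure points is a reasonable account of the actual difficulty. The only substantive addition I would make is that your ``fallback'' via \autoref{lem:simplifiable-achieve-susp-capacity} should be stated as strictly weaker than the conjecture (it preserves capacity asymptotically, not the pair $(s,k)$), which the paper itself also emphasizes.
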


\section{New Lowers Bounds on Maximum SUSP Size}
\label{sec:results}

The features of simplifiable SUSPs we proved in the previous sections
make them well suited for discovery via computer search.  We use
iterative local search (ILS) techniques to locate large simplifiable
SUSPs with small width $k \le 12$.  We find simplifiable SUSPs that
match or exceed the size of those found in previous work
\cite{cksu05,ajx20}.  Because these puzzles are simplifiable,
\autoref{thm:simplifiableomega} implies that these simplifiable SUSPs
produce much stronger bounds on $\omega$ than the SUSPs of previous work
for $k \le 12$.

\subsection{New Limits on SUSP Size}

\begin{table}[t]
  \begin{centering}
    \hspace{2.5ex}
    \begin{tabular}{ccrrrrrrrrrrrr}
      \toprule
      &&\multicolumn{12}{c}{$k$} \\ \cmidrule{3-14}
      
      & & \multicolumn{1}{c}{1} & \multicolumn{1}{c}{2} &
      \multicolumn{1}{c}{3} & \multicolumn{1}{c}{4} &
      \multicolumn{1}{c}{5} & \multicolumn{1}{c}{6} &
      \multicolumn{1}{c}{7} & \multicolumn{1}{c}{8} &
      \multicolumn{1}{c}{9} &\multicolumn{1}{c}{10} &
      \multicolumn{1}{c}{11} & \multicolumn{1}{c}{12}\\ [.5ex] \hline 

      \multirow{2}{*}{\cite{cksu05}} &$s \ge$ & 1 & 2 & 3 & 4 & 4 & 10
      & 10 & 16 & 36 & 36 & 36 & 136 \\[0.5ex]
      
      &$\omega \le$ & 3.00 & 2.88 & 2.85 & 2.85 & & 2.80 & & & 2.74 &
      & & 2.70 \\[0.5ex]
      
      \multirow{2}{*}{\cite{ajx20}}&$s \ge$ & 1 & 2 &
      3 & 5 & 8 & 14 & 21 & 30 & 42 & 64 &
      112 & 196\\[0.5ex]
      
      &$\omega \le$ & 3.00 & 2.88 & 2.85 & 2.81 & 2.78 & 2.74 &
      2.73 & 2.72 & 2.72 & 2.71 & 2.68 & 2.66 \\[0.5ex] \hline

      \multirow{2}{*}{Us}&$s \ge$ & 1 & 2 & 3 & 5 & 8 & 14 &
      \textbf{23} & \textbf{35} & \textbf{52} & \textbf{78} &
      \textbf{128} & 196\\[0.5ex]
      
      &$\omega \le$ & 3.00 & \textbf{2.67} & \textbf{2.65} &
      \textbf{2.59} & \textbf{2.57} & \textbf{2.52} &
      \textbf{2.505} & \textbf{2.52} & \textbf{2.53} &
      \textbf{2.53} & \textbf{2.52} & \textbf{2.52} \\ \bottomrule
      
    \end{tabular}
  \end{centering}
  \medskip
  \caption{Comparison with \cite{cksu05,ajx20} on lower bounds for the
    maximum of size of width-$k$ SUSPs and upper bounds on
    $\omega$ they imply.  All the results in this work are simplifiable SUSPs.
    Previous work was analyzed using \autoref{lem:puzzle-omega}, and simplifiable
SUSPs were analyzed using \autoref{thm:simplifiableomega}.  The bold font indicates improvements over prior work. \label{table:compare}}
\end{table}

\autoref{table:compare} shows our constructive improvements over
\cite{cksu05,ajx20} on the maximum size of \susps{s}{k} for $1 \le k
\le 12$.  For $k \le 5$, the sizes in \cite{ajx20} were shown to be
maximum by exhaustive search, our results match this.  For $6 < k \le
11$, we construct larger SUSPs than in the previous work.  The
simplifiable \susp{196}{12} is constructed as the square of a
simplifiable \susp{14}{6} using
\autoref{lem:simplifiable-susp-closure}.  We note that this was also
how \cite{ajx20} constructed their \susp{196}{12}, but our notion of
simplification gives a theoretical explanation for why taking product
of SUSPs can produce a SUSP.  We include some of the maximal
simplifiable SUSPs we found in \autoref{app:simplifiable-susps} and
note that these puzzles can be checked for correctness by applying
\textproc{Simplify} to the 3D graphs induced by the simplifiable SUSPs
we found.  

To compute $\omega$ we use \autoref{thm:simplifiableomega}, because
all the puzzles we construct are simplifiable SUSPs.  This results
in substantial improvements over previous work: decreasing the bound on
$\omega$ by about $0.2$ in the domain we consider.  We note that the
improvement of bounds on $\omega$ appears to stall for $k \ge 8$.  We
do not believe that this reflects a real limit on the size of simplifiable
SUSPs; rather it represents a barrier for our search techniques and the
large polynomial-time cost of running \textproc{Simplify} to determine
whether a puzzle is a simplifiable SUSP.  Although our results
improve substantially on \cite{cksu05} in the domain of $k \le 12$,
their construction achieves $\omega \le 2.48$ as $k \rightarrow
\infty$.

The experimental evidence in \autoref{table:compare} is consistent
with \autoref{conj:size}.  We have not found any \susps{s}{k} for
which we have not also found simplifiable \susps{s}{k}.  That said,
all the SUSPs we know at the boundary of the search space are also
simplifiable SUSPs.  Although
\autoref{lem:simplifiable-achieve-susp-capacity} implies that a
simplifiable SUSP can achieve the same capacity that infinite SUSP
families can, it does not immediately imply this conjecture, because
$k$ increases to $sk$ in that argument.

\subsection{Search \& Implementation}

This section briefly discusses the algorithm we used to search for
large simplificable SUSPs, and our implementation of it.

\subsubsection{Search}

The search space of \puzs{s}{k} is enormous---na{\"i}vely it scales as
$3^{s \cdot k}$.  In \cite{ajx23} they noted that for $k \le 5$, it is
feasible to exhaustively examine all distinct puzzles up to
symmetries.  For $k > 5$, an exhaustive search seems infeasible, so we
employ a different strategy---a variant of \emph{iterative local
  search} (c.f., e.g., the textbook \cite{ai:ma4} for general
background on this search strategy).

For the purposes of search we define the \emph{fitness} of an
\puz{s}{k}, which is represented as a function $f : [3]^{s \times k}
\rightarrow \set{0, 1, 2, \ldots s^3 - s}$.  We define $f(P) = s^3 -
|E(\Call{Simplify}{H_P})|$.  By \autoref{def:simplifiable}, $f(P)$ is
$f_{\max} = s^3 - s$ when $P$ is a simplifiable SUSP and $f(P) <
f_{\max}$ is $P$ is not a simplifiable SUSP.

At the base level, our algorithm maintains a queue $Q$ of \puzs{s}{k}
on the search frontier ordered by increasing fitness.  The algorithm
dequeues the highest fitness puzzle $P$ from $Q$ and then considers a
variety of local modifications to $P$: (i) changing the element in
cell $(i, j)$ of $P$ to a different element of $[3]$, (ii) permuting
the element values in a column or row of $P$ according to a
permutation of $\Sym{[3]}$, or (iii) pseudorandomly replacing the
contents of a row or column of $P$.  These modified puzzles are placed into
$Q$ ordered by their fitness.

When a simplifiable \susp{s}{k} $P$ is found, the algorithm outputs
it, then empties $Q$, and then enqueues all \susp{s+1}{k} puzzles that
have $P$ as their first $s$ rows, and restarts the search with this
new frontier $Q$.  To avoid being stuck in a loop, the algorithm keeps
a hash table of all the puzzles that have been examined and checks
them before inserting a puzzle into $Q$.  The algorithm can be primed to
start from a particular \puz{s}{k} $P$ or from an empty \puz{0}{k}.
The algorithm searches over fixed $k$, but increasing $s$.  In
principle, the algorithm halts when the search space has been
exhaustively searched, but this seems infeasible for $k > 5$.

\subsubsection{Implementation}
We implement our search algorithm in C/C++, which constrains it with a
number of practical considerations.  In practice, $Q$ and the hash
table must not exceed the available memory, and at that limit we chose to
drop puzzles with lowest fitness.  A consequence of this is that $Q$
can become empty even if the search space was not exhaustively
examined, and in doing so the implementation can miss simplifiable
SUSPs that exist.  Furthermore, the actual fitness function that we use is
more complex, because the worst-case running time of
\textproc{Simplify}, $O(s^{5.5})$, is still too inefficient to be run
for every puzzle.  In practice, the running time of
\textproc{Simplify} tends to be closer to $\Theta(s^3)$, because when
run on a typical puzzle, which is far from being a simplifiable SUSP,
no edges are removed from any of the three 2D faces causing the
algorithm to immediately reach a fixed point and stop.  However, to
make the fitness function more efficient, we use the heuristics from
\cite{ajx23}, such as checking whether the puzzle is even a uniquely
solvable puzzle, before bothering to run \textproc{Simplify}.  To make
effective use of available computing resources, the search algorithm
was parallelized with OpenMP to run multithreaded, and we chose to
implement \textproc{Simplify} both in the way described in
\autoref{alg:simplify} to run on the CPU, but also in a parallelized
form to run on the GPU using the CUDA computing platform.

Since CUDA operates on the single instruction, multiple data (SIMD)
paradigm, parts of the algorithm are vectorized to be
effectively accelerated.  At a high level, the algorithm can be broken
down into more easily parallelizable parts: (i) initializing the
projected faces of the input 3D graph $H$, (ii) decomposing each of
the projections into strongly connected components, (iii) calculating
edges to remove in each projected 2D graph, and (iv) finally updating
$H$ with changes.  To perform (iii), we use the parallelized strongly
connected components decomposition algorithm of \cite{bbbc11}.
We compared the performance of our parallel algorithm of
\textproc{Simplify} with the sequential implementation for puzzles up
to size 196. On average, and without substantial optimization the
CUDA version was able to achieve a speed up by a factor of 10 on the
GPU (GeForceGTX 1060 with 1280 CUDA cores) vs. a single CPU core (Intel
i5-6400 at 2.7 GHz).  Our implementations for \textproc{Simplify} and
our search algorithms, along with a command-line tool for verifying
simplifiable SUSPs, can be found in our publicly available code
repository: \url{https://bitbucket.org/paraphase/matmult-v2}.

\section{Conclusions}
\label{sec:conclusion}

We propose and analyze simplifiable SUSPs, a new subclass of strong
uniquely solvable puzzles.  We prove that simplifiable SUSPs have nice
properties: they are efficiently verifiable and generate infinite
families of SUSP that lead to tighter bounds on $\omega$ than the
na\"{i}ve analysis provides.

We report the existence of new large (simplifiable) SUSPs with width
$7 \le k \le 11$ and strengthen the bound on $\omega$ that they imply
compared to previous work.  The SUSPs we have found through computer search
are now close to producing the same bounds ($\omega \le 2.505$) as
those families of SUSP designed by human experts ($\omega \le 2.48$).

New insights into the structure of (simplifiable) SUSPs or the search
space seem necessary to advance this research program.  One of the
main bottlenecks in the search is the running time of
\textproc{Simplify}, even if it quickly reaches a fixed point, the
algorithm still spends $\Omega(s^3)$ time to construct an instance
from an \puz{s}{k} with $s \cdot k$ entries.  By design, whether a
puzzle is a simplifiable SUSPs is decidable in polynomial time, but
it remains open whether a puzzle being an SUSP is \coNP-complete.  As
noted above, we conjecture that the existence of an \susp{s}{k}
implies the existence of a simplifiable \susp{s}{k}.

\section{Acknowledgments}
\label{sec:ack}

The second author's work was funded in part by the \censor{Union College} Summer
Research Fellows Program.  Both authors acknowledge contributions from
other student researchers to various aspects of this research program:
\censor{Zongliang Ji, Anthony Yang Xu, Jonathan Kimber, Akriti Dhasmana,
Jingyu Yao, Kyle Doney, Jordan An, Harper Lyon, Zachary Dubinsky,
Talha Mushtaq, Jing Chin, Diep Vu, Hung Duong, Siddhant Deka, Baibhav
Barwal, Aavasna Rupakheti}. We also thank our anonymous reviewers for their helpful suggestions.

\bibliographystyle{plainurl}
\bibliography{references}

\appendix

\pagebreak
\section{Examples of Large Simplifiable SUSPs}
\label{app:simplifiable-susps}

Below are examples of simplifiable SUSPs that are representative of the largest SUSPs we found for each $k \le 10$.

\begin{multicols*}{3}
\noindent {\small Simplifiable \susp{1}{1}:}
\begin{Verbatim}
1

\end{Verbatim}

\noindent {\small Simplifiable \susp{2}{2}:}
\begin{Verbatim}
11
23

\end{Verbatim}

\noindent {\small Simplifiable \susp{3}{3}:}
\begin{Verbatim}
111
123
322




\end{Verbatim}

\noindent {\small Simplifiable \susp{5}{4}:}
\begin{Verbatim}
2132
2221
2322
3111
3312





\end{Verbatim}

\noindent {\small Simplifiable \susp{8}{5}:}
\begin{Verbatim}
11111
12231
12312
13222
31132
32212
32223
33122
\end{Verbatim}

\noindent {\small Simplifiable \susp{14}{6}:} 
\begin{Verbatim}
213222
213321
221211
221312
231322
233211
233312
311211
311232
311331
323112
323331
331112
332232
\end{Verbatim}

\noindent {\small Simplifiable \susp{23}{7}:}
\begin{Verbatim}
2313133
1111221
2131122
2323112
2121322
1131231
1121333
1122323
2131312
1322223
2312112
1332213
1322311
2333213
1131313
2322132
2132333
2122113
1332133
1332321
2313223
1122231
2132123
\end{Verbatim}

\noindent {\small Simplifiable \susp{35}{8}:}
\begin{Verbatim}
31322111
12223111
32112311
32233311
31222121
12322121
32133221
31312321
32113321
13323321
13123131
12122231
32233231
13113331
13212212
13113312
12223312
13123122
32113222
13112132
31222332
32233332
13212113
31312313
31222313
32113313
31322123
13313123
12122223
12223323
12322133
12112233
13212233
31312233
32133233




\end{Verbatim}

\noindent {\small Simplifiable \susp{52}{9}:}
\begin{Verbatim}
111233111
111322113
111323131
111332132
111333312
112232111
112233113
112322131
112323133
112332311
113333333
121223111
121232113
121233131
121322133
121323311
121332313
121333331
123222111
122223113
122232131
122233133
122322311
122323313
122332331
122333333
211222113
211223131
211232133
211233311
211322313
211323331
211332333
212222131
212223133
212232311
212233313
212322331
222323333
221222133
221223311
221232313
221233331
221322333
222222311
222221313
222232331
222231333
312333332
221222121
312333113
222332213



\end{Verbatim}

\noindent {\small Simplifiable \susp{78}{10}:}
\begin{Verbatim}
3111312121
1111111121
2332112121
3123311121
1323211121
2312111121
3322211121
3332221121
1322321121
1133221121
1122322121
3323222221
3132321121
2313312121
3112212121
2333212121
3132122121
1123121121
3133322121
3321322121
3133212121
3333121121
3111312232
1111111232
2332112232
3123311232
1323211232
2312111232
3322211232
3332221232
1322321232
1133221232
1122322232
3323222233
3132321232
2313312232
3112212232
2333212232
3132122232
1123121232
3133322232
3321322232
3133212232
3333121232
3111312312
1111111312
2332112312
3123311312
1323211312
2312111312
3322211312
3332221312
1322321312
1133221312
1122322312
3323222311
3132321312
2313312312
3112212312
2333212312
3132122312
1123121312
3133322312
3321322312
3133212312
3333121312
2111312331
3323121331
2332212331
3331212331
3122211331
3132322331
1122121331
3121311331
2312312331
1323322331
3132212331
1323222332
\end{Verbatim}
\end{multicols*}

\end{document}